\newcommand{\ie}{i.\,e.}
\newtheorem{theorem}{Theorem}
\newtheorem{proposition}{Proposition}
\newtheorem{lemma}{Lemma}
\newtheorem{remark}{Remark}
\newtheorem*{remark*}{Remark}
\newtheorem*{remarks*}{Remarks}
\newcommand{\bbR}{{\mathbb R}}
\newcommand{\bbC}{{\mathbb C}}
\newcommand{\bbN}{{\mathbb N}}
\newcommand{\Hone}{H^{1}({\mathbb R}^N,{\mathbb C})}
\newcommand{\HmOne}{H^{-1}({\mathbb R}^N, {\mathbb C})}
\newcommand{\cL}{{\mathcal L}}
\newcommand{\cM}{{\mathcal M}}
\newcommand{\cN}{{\mathcal N}}
\newcommand{\cE}{{\mathcal E}}
\newcommand{\cT}{{\mathcal T}}
\newcommand{\cR}{{\mathcal R}}
\newcommand{\cC}{{\mathcal C}}
\newcommand{\sY}{{\mathsf Y}}
\newcommand{\cF}{{\mathcal F}}
\newcommand{\esssup}{ess\,sup}
\renewcommand{\Re}{{\mathrm{Re}}}
\renewcommand{\Im}{{\mathrm{Im}}}
\begin{document}

\title{Solitary wave dynamics in time-dependent potentials}

\author{Walid K. Abou Salem\footnote{Department of Mathematics, University of Toronto, Toronto, Ontario, Canada M5S 2E4} \footnote{E-mail: walid@math.utoronto.ca}}
\date{}

\maketitle

\begin{abstract}

We rigorously study the long time dynamics of solitary wave solutions of the nonlinear Schr\"odinger equation in  {\it time-dependent} external potentials. To set the stage, we first establish the well-posedness of the Cauchy problem for a generalized nonautonomous nonlinear Schr\"odinger equation. We then show that in the {\it space-adiabatic} regime where the external potential varies slowly in space compared to the size of the soliton, the dynamics of the center of the soliton is described by Hamilton's equations, plus terms due to radiation damping. We finally remark on two physical applications of our analysis. The first is adiabatic transportation of solitons, and the second is Mathieu instability of trapped solitons due to time-periodic perturbations.

\end{abstract}

%{\bf keywords} {\it nonlinear Schr\"odinger equation, solitary wave solutions, dynamical systems, well-posedness, space-adiabatic limit}

%%%%%%%%%%%%%%%%%%%%%%%%%%%%%%%%%%%%%%%%%%%%%%%%%%%%%%%%%%%%%%%%%%%%%%
\section{Introduction}

\subsection{Heuristic discussion and overview of earlier results}

In the last few years, there has been substantial progress in rigorously understanding solitary wave dynamics of the nonlinear Schr\"odinger equation in {\it time-independent} external potentials. The basic picture is that in the semi-classical limit, the dynamics of the center of the soliton is described by Hamilton's (or Newton's) equations, plus terms due to {\it radiation damping}. This is a beautiful example where the solitary wave solution of the nonlinear Schr\"odinger equation behaves like a classical point particle in a suitable limit. Rigorous confirmation of this picture in time-independent potentials has been given  in \cite{FTY1,FTY2} for the Hartree equation, and in \cite{BJ1} for local nonlinearities. The case of general nonlinearities has been studied in \cite{FJGS1} ; see also \cite{FJGS2,GS1,HZ1}. 

In this paper, we rigorously study the dynamics of solitary wave solutions of the nonlinear Schr\"odinger equation in {\it time-dependent} external potentials. As far as we know, this problem has not been studied in the literature so far. We show that when the potential varies {\it slowly} in space compared to the size of the soliton, which we call the {\it space-adiabatic regime}, the center of mass motion of the soliton is almost like that of a classical point particle in the external potential,  independent of the rate of change of the potential with time. \footnote{Unlike the case when the external potential is time-independent, the semi-classical limit and the space-adiabatic limit are not equivalent when the potential is time-dependent. Instead, the semi-classical limit, which is equivalent to scaling both time and space $t\rightarrow ht, x\rightarrow hx,$ $h\ll 1,$ is a special case of the space-adiabatic limit (where only space scales as $x\rightarrow hx, h\ll 1$).  These statements will be made precise in the following sections.} We also show that this picture holds for much longer time scales ($O(|\log h|/h), \ \ h\ll 1,$) than the one given in \cite{FJGS1} ($O(h^{-1}), \ \ h\ll 1$). Along the way, we discuss sufficient conditions for the well-posedness of a generalized nonautonomous nonlinear Schr\"odinger equation with time-dependent potentials and nonlinearities. We finally sketch two physical applications of our analysis. The first is adiabatic transportation of solitons, and the second Mathieu instability of trapped solitons due to time-periodic perturbations. Our analysis relies on important developments in the nonlinear Schr\"odinger equation during the past two decades, \cite{Str1,St1,GV1,GV2,BL1,BL2,BLP1,GSS1,GSS2,We1,We2,BP1,BS1,Ka1,Ka2,BJ1,FTY2}, particularly \cite{FJGS1}; see also for \cite{Ca1,SS1} for comprehensive reviews.

We note that our analysis regarding the well-posedness of the generalized nonautonomous nonlinear Schr\"odinger equation is of some independent interest, especially that rigorous investigation of such equations is poor compared to the autonomous one, despite the former's relevance to very many experiments in quantum optics and Bose-Einstein condensates, where experimentalists can change various parameters with time.

\subsection{Notation}

%\noindent For $z\in \bbC,$ $\overline{z}$ stands for the complex conjugate of $z,$ $\Re z =\frac{1}{2}(z+\overline{z}),$ $\Im z=\frac{1}{2i}(z-\overline{z})$ stand for %the real and imaginary parts of $z$ respectively.  

\noindent In the following, $L^p(I)$  denotes the standard Lebesgue space, $1\le p\le \infty,$ with norm
\begin{equation*}
\|f\|_{L^p} = (\int_I dx~ |f(x)|^p)^{\frac{1}{p}}, \ \ f\in L^p(I),  p<\infty , \ \ \|f\|_{L^\infty} = \esssup(|f|) , \ \ f\in L^\infty (I).
\end{equation*}
We also define
\begin{equation*}
\|f\|_{L^p(I,L^q(J))}:= \|\, \|f\|_{L^q(J)}\, \|_{L^p(I)}.
\end{equation*}
For $1\le p\le \infty,$ $p'$ is the conjugate of $p, \ie , 1/p+1/p' = 1.$
We denote by $\langle \cdot ,\cdot \rangle $ the scalar product in $L^2(\bbR^N),$ 
\begin{equation*}
\langle u, v\rangle = \Re\int_{\bbR^N} u\overline{v}, \ \ u,v\in L^2(\bbR^N),
\end{equation*}
as well as its extension by duality to ${\mathsf Y}\times {\mathsf Y}',$ where ${\mathsf Y}$ and ${\mathsf Y}'$ are complete metric spaces such that ${\mathsf Y}\hookrightarrow L^2 \hookrightarrow {\mathsf Y}',$ with dense embedding

\noindent Given the multi-index $\overline{\alpha}= (\alpha_1,\cdots ,\alpha_N) \in \bbN^N,$ we denote $|\overline{\alpha}| = \sum_{i=1}^N \alpha_i.$ Furthermore, $\partial_x^{\overline{\alpha}} := \partial_{x_1}^{\alpha_1}\cdots \partial_{x_N}^{\alpha_N}.$

\noindent 
For $1\le p\le\infty$ and $s\in\bbN,$ the (complex) Sobolev space is given by
\begin{equation*}
W^{s,p}(\bbR^N) := \{u\in {\mathcal S}'(\bbR^N): \partial_x^{\overline{\alpha}} u\in L^p(\bbR^N), |\overline{\alpha}|\le s\}, 
\end{equation*}
where ${\mathcal S}'(\bbR^N)$ is the space of tempered distributions. We equip $W^{s,p}$ with the norm 
\begin{equation*}
\|u\|_{W^{s,p}} = \sum_{\alpha, |\mathbf{\alpha}|\le s} \|\partial_x^{\overline{\alpha}} u \|_{L^p},
\end{equation*}
which makes it a Banach space. Moreover, $W^{-s,p'}$ is the dual of $W^{s,p}.$

\noindent We denote by 
\begin{equation*}
H^{s,p}(\bbR^N):= \{u\in {\mathcal S}'(\bbR^N): \cF^{-1} (1+|k|^2)^{\frac{s}{2}}\cF u \in L^p(\bbR^N)\}, \ \ s\in\bbR, 1\le p\le \infty,
\end{equation*}
where ${\mathcal F}$ stands for the Fourier transform. The space $H^{s,p}$ is equipped with the norm
\begin{equation*}
\|u\|_{H^{s,p}} = \|{\mathcal F}^{-1}(1+|k|^2)^{\frac{s}{2}} {\mathcal F} u\|_{L^p} , \ \ u\in H^{s,p}(\bbR^N),
\end{equation*}
which makes it a Banach space.  We use the shorthand $H^{s,2}=H^s.$

\noindent Given $f$ and $g$ real functions on $\bbR^N,$ we denote their convolution by $\star,$
\begin{equation*}
f\star g(x):= \int dy~ f(y-x) g(y) .
\end{equation*}

\noindent Given $x\in \bbR^N,$ we denote $\|x\|:=\sqrt{\sum_{i=1}^N x_i^2}.$

\subsection{Description of the problem}

In this paper, we study the long time dynamics of solitary wave solutions of the nonlinear Schr\"odinger equation in time-dependent external potentials. The nonlinear Schr\"odinger equation is of the form 
\begin{equation}
\label{eq:NLSE}
i\partial_t \psi = (-\Delta + V_h(t,x))\psi -f(\psi),
\end{equation}
where $\psi: {\mathbb R}\times{\mathbb R}^N \rightarrow {\mathbb C}, x\in {\mathbb R}^N$ denotes a point in the configuration space, $t\in {\mathbb R}$ is time, $\partial_t = \frac{\partial}{\partial t}, \Delta = \sum_{j=1}^N \frac{\partial^2}{\partial_{x_j}^2}$ the $N$-dimensional Laplacian, $V_h$ is the external potential, such that 
\begin{equation*}
V_h(t,x)\equiv V(t,hx) , h\in {\mathbb R}^+,
\end{equation*}
and the nonlinearity $f$ is a mapping on complex Sobolev spaces such that 
\begin{equation*}
f:H^1({\mathbb R}^N, {\mathbb C}) \rightarrow H^{-1}({\mathbb R}^N, {\mathbb C}), 
\end{equation*}
$f(0)=0,$ and $\overline{f(\psi)}=f(\overline{\psi}),$ where $\overline{\cdot}$ denotes complex conjugation. Typical nonlinearities are local ones
\begin{equation}
\label{eq:LocalNL}
f(\psi)=\lambda |\psi|^s \psi, \ \ \lambda>0,\ \  0<s<\frac{4}{N},
\end{equation} 
and Hartree (nonlocal) nonlinearites
\begin{equation*}
f(\psi)=\lambda (W\star |\psi|^2)\psi ,\lambda>0 ,
\end{equation*}
where $W$ is continuous, positive, spherically symmetric, and tends to zero as $\|x\|\rightarrow \infty.$ Note that the above nonlinearities are self-focusing, and a general nonlinearity can be a sum of both local and nonlocal ones. The external potentials that we consider in this paper satisfy 
\begin{equation}
\label{eq:Pot1}
V_h(t,x) = V(t,hx), \ \ V(t,x) \in C^1(\bbR, C^2(\bbR^N))
\end{equation}
such that 
\begin{equation}
\label{eq:Pot2}
\partial_x^{\overline{\alpha}} V \in L^\infty(\bbR,L^\infty(\bbR^N)+ L^p(\bbR^N)) , \ \ |\overline{\alpha}|\le 1,
\end{equation}
$p> \frac{N}{2}, p\ge 1,$
and 
\begin{equation}
\label{eq:Pot3}
\partial_t V_h(t,x)\in L^\infty(\bbR,L^\infty(\bbR^N)) .
\end{equation}
General assumptions about the model will be discussed in detail in Subsection \ref{sec:Assumptions}. We also show that the Cauchy problem with these assumptions is (globally) well-posed in $H^1$ in Section \ref{sec:Well-Posedness}. 

When $V=0,$ the nonlinear Schr\"odinger equation (\ref{eq:NLSE}) with nonlinearities as given above admits solitary wave solutions, which are stable stationary, spherically symmetric and positive solutions
\begin{equation}
 \eta_{\sigma}(x,t):= 
    e^{i(\frac{1}{2}v\cdot(x-a)+\gamma)}\eta_\mu(x-a),
   \label{eq:Sol}
\end{equation}
where $\sigma:=\{a,v,\gamma,\mu\}$, 
$a=vt+a_0$, $\gamma=\mu t +  
\frac{v^2}{4}t+ \gamma_{0}$, with $\gamma_0 \in 
[0,2\pi)$, $a_0,v\in\bbR^N$ and
$\mu\in\mathbb{R}^{+}$, constant, 
and $\eta_\mu$ is a positive solution of the 
nonlinear eigenvalue problem
\begin{equation}
(-\Delta + \mu )\eta_\mu -f(\eta_\mu)=0.
\end{equation}
The solution (\ref{eq:Sol}) stands for a solitary travelling wave with velocity $v,$ center $a$ and phase $\frac{1}{2}v\cdot(x-a)+\gamma,$ and the {\it size} of the soliton is $\propto \mu^{-1/2},$ in the sense that $\eta_\mu \sim e^{-\sqrt{\mu}\|x\|}$ as $\|x\|\rightarrow\infty,$ see \cite{St1,GV1,GV2,BL1,BLP1,GSS1,GSS2,We1,We2,BP1,BS1,Cu1,Cu2,Ca1,SS1}. We consider in this paper potentials which vary slowly in space compared to the size of the soliton, $i.e.,$ 
\begin{equation*}
\sup_{t\in \bbR} \frac{Supp |\nabla V(x,t)|}{\sqrt{\mu}}\ll 1,
\end{equation*}
which corresponds to the space-adiabatic limit if we set the size of the soliton to $O(1).$ 

We now state a rigorous result for the special class
of nonlinearities discussed above. A more general result, Theorem \ref{th:Main}, will be stated in Subsection \ref{sec:Main} after listing general assumptions.

\begin{theorem} 
\label{th:Introduction}
Suppose the nonlinearity $f$ is given by (\ref{eq:LocalNL}), 
and that the external potential $V_h$ satisfies (\ref{eq:Pot1})-(\ref{eq:Pot3}). Let $I_0$ be any closed, bounded interval in $\bbR^+$. For $h\ll 1,$ suppose the initial condition $\psi_0$ satisfies 
\begin{equation*}
\|e^{-i\frac{1}{2}v_0\cdot x}(\psi_0-\eta_{\sigma_{0}})\|_{H^1}<h,
\end{equation*}
for some $\sigma_0\in \bbR^N\times\bbR^N\times [0,2\pi)\times I_0$. 
Then, for small enough $h\ll 1,$ there exists an absolute positive constant $C$, independent of $h,$ but possibly dependent on $I_0,$ such that for times $0\leq t\leq  C  \frac{|log (h)|}{h},$ the 
solution to the nonlinear Schr\"odinger equation (\ref{eq:NLSE}) with initial condition $\psi_0$ is of the form 
\begin{equation*}
\psi(x,t) =e^{i(\frac{1}{2}v\cdot (x-a) + \gamma)}(\eta_\mu(x-a) + w(x-a,t)),
\end{equation*}
where 
\begin{equation*}
\|w\|_{H^1}=O(h^{\frac{3}{4}}),
\end{equation*}
and where the parameters $v, a, \gamma$ and $\mu$ satisfy the 
differential equations
\begin{align*}
\frac{1}{2}\partial_t v & = - (\nabla V)(t,a) + O(h^{\frac{3}{2}}), \\ 
\partial_t a & = v + O(h^{\frac{3}{2}}), \\
\partial_t \gamma & = \mu - V(t,a) + \frac{1}{4}v^2 + O(h^{\frac{3}{2}}), \\ 
\partial_t{\mu} &= O(h^{\frac{3}{2}}).
\end{align*}
\end{theorem}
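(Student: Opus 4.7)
The plan is to adapt the modulation-equation plus Lyapunov-functional strategy used in \cite{FJGS1,BJ1,FTY2} for time-independent potentials to the present setting where $V(t,hx)$ depends both on $t$ and on the slow spatial variable. So long as $\psi(t)$ stays in a small $H^1$-tubular neighborhood of the soliton manifold $\{\eta_\sigma\}$, the implicit function theorem gives a unique decomposition
\begin{equation*}
\psi(x,t) = e^{i(\tfrac12 v(t)\cdot(x-a(t)) + \gamma(t))}\bigl(\eta_{\mu(t)}(x-a(t)) + w(x-a(t), t)\bigr),
\end{equation*}
with $w(\cdot,t)$ subject to the $2N+2$ symplectic orthogonality conditions against the tangent generators $\partial_{a_i}\eta_\sigma,\ \partial_{v_i}\eta_\sigma,\ \partial_\gamma\eta_\sigma,\ \partial_\mu\eta_\sigma$ (read off in the frame where the phase $e^{i\phi}$ has been stripped). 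Global well-posedness in Section \ref{sec:Well-Posedness} then ensures $\psi(t)$ remains in this tube for as long as the $H^1$-control on $w$ can be maintained.

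Substituting the decomposition into \eqref{eq:NLSE}, passing to the moving frame $y=x-a$, and Taylor-expanding
\begin{equation*}
V(t, h(y+a)) = V(t, ha) + h\, y\cdot (\nabla V)(t, ha) + h^2 R_2(t, a, hy),
\end{equation*}
yields an equation $i\partial_t w = L_{\mu(t)} w - \cR_1(\dot\sigma) + \cF(w,t)$, where $L_\mu$ is the linearization of $-\Delta + \mu - f$ at $\eta_\mu$, $\cR_1(\dot\sigma)$ is linear in the defects $\dot a - v$, $\dot v + 2(\nabla V)(t, ha)$, $\dot\gamma - \mu + V(t, ha) - \tfrac14 v^2$, and $\dot\mu$, while $\cF$ collects the term $h^2 R_2(\eta_\mu + w)$ together with contributions at least quadratic in $w$. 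Projecting onto the $2N+2$ tangent generators and inverting the resulting Gram matrix (which is non-degenerate at $w=0$) solves for $\dot\sigma$; exponential decay of $\eta_\mu$ bounds the $R_2\,\eta_\mu$ contribution in $H^1$ by $O(h^2)$, so the claimed ODEs follow with errors of order $h^{3/2}$, provided the running bootstrap assumption $\|w\|_{H^1}\lesssim h^{3/4}$ holds.

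For control of $w$, introduce the modulated Lyapunov functional
\begin{equation*}
\cL(t) := \cE_V(\psi,t) + \mu(t)\cN(\psi) - \cE_V(\eta_{\sigma(t)},t) - \mu(t)\cN(\eta_{\sigma(t)}),
\end{equation*}
where $\cE_V$ is the standard NLS Hamiltonian (kinetic plus $\int V_h|\psi|^2$ minus the primitive of the nonlinearity) and $\cN(\psi)=\int|\psi|^2\,dx$. Since $\eta_\mu$ is a critical point of the $V\equiv 0$ version of $\cE_V+\mu\cN$, and since the Hessian of that functional is coercive on the subspace orthogonal to its generalized null space in the Grillakis-Shatah-Strauss sense \cite{GSS1,GSS2}, one obtains $c\|w\|_{H^1}^2 \le \cL(t) + O(\|w\|_{H^1}^3)$. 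Differentiating $\cL$ in time and using conservation of $\cN(\psi)$, what remains is a combination of $\int(\partial_t V_h)(|\psi|^2 - |\eta_\sigma|^2)$, of $\dot\mu$-contributions, and of $\dot\sigma\cdot\nabla_\sigma(\cE_V+\mu\cN)(\eta_\sigma,t)$; Taylor-expanding $\partial_t V_h$ about $ha$, controlling $\int\eta_\mu\,\Re w$ via mass conservation, and controlling $\int y_i\eta_\mu\,\Re w$ via the first-moment identities inherited from the symplectic conditions, the leading $O(1)$ and $O(h)$ contributions cancel and one is left with $\dot\cL = O(h^2\|w\|_{H^1}) + O(h\|w\|_{H^1}^2)$. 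A Gronwall argument then produces $\|w\|_{H^1}(t)\lesssim (\|w\|_{H^1}(0) + h^2 t)\,e^{Cht/2}$, and choosing $C_0$ small enough, on $[0,\,C_0|\log h|/h]$ this gives $\|w\|_{H^1}\lesssim h^{1-CC_0/2}|\log h| = O(h^{3/4})$, closing the bootstrap.

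The main obstacle is the cancellation structure needed to bound $\dot\cL$. Taken naively, the time-dependence of $V$ contributes $\|(\partial_t V_h)|\psi|^2\|_{L^1} = O(1)$, which would allow the bootstrap to close only on time $O(h^{-1/2})$. Reducing this to $O(h^2\|w\|_{H^1})$ requires combining (i) the slow spatial variation of $V$ so that $\partial_t V_h$ admits a Taylor expansion about the soliton center $ha$, with (ii) mass conservation and a carefully chosen set of symplectic orthogonality conditions on $w$ strong enough to imply the scalar and first-moment identities that kill the leading orders. A secondary technical difficulty is that $L_\mu$ is non-self-adjoint and carries a $(2N+2)$-dimensional generalized null space, so the coercivity bound must be extracted via the conservation laws and the symplectic structure on the soliton manifold rather than by naive spectral projection.
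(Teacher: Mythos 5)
Your overall architecture is the right one and matches the paper's: skew-orthogonal modulation decomposition, projection onto the tangent space of the soliton manifold to get the parameter ODEs with errors $O(|\alpha|\,\|w\|_{H^1}+h^2+\|w\|_{H^1}^2)$, a modulated Lyapunov functional bounded below by coercivity of the linearized operator and above by integrating its time derivative, and a Gronwall/iteration step that trades the exponential constant for the $|\log h|/h$ time scale (your Gronwall factor $e^{Cht}$ is the continuous version of the paper's $C_1^{n}$ over $n\sim \epsilon|\log h|$ windows of length $C_2/h$). The gap is in the choice of Lyapunov functional and the claimed bound $\dot\cL=O(h^2\|w\|_{H^1})+O(h\|w\|_{H^1}^2)$. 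Because you keep the potential energy $\tfrac12\int V_h|\psi|^2$ inside $\cE_V$, differentiating leaves the term $\tfrac12\int \partial_tV_h\,(|\psi|^2-|\eta_\sigma|^2)$. Your orthogonality conditions do kill the part linear in $w$ (and only up to errors requiring uniform bounds on $\nabla\partial_t V$, which the hypotheses (\ref{eq:Pot1})--(\ref{eq:Pot3}) do not provide: only $\partial_tV_h\in L^\infty$ is assumed, with no smallness in $h$ and no control of mixed derivatives). But the quadratic part $\tfrac12\int\partial_tV_h\,|w|^2$ is genuinely $O(\|w\|_{L^2}^2)$ with \emph{no} factor of $h$, since $\partial_tV_h(t,x)=\partial_tV(t,hx)$ gains nothing from the spatial rescaling. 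With $\dot\cL\gtrsim \|w\|^2$ per unit time, the bootstrap closes only on $O(1)$ time intervals, and iterating $O(|\log h|/h)$ such intervals produces a super-polynomially large constant; the stated time scale and the bound $\|w\|_{H^1}=O(h^{3/4})$ do not follow.

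The paper's fix is precisely at this point: it takes as Lyapunov functional $\cC_\mu(u,\eta_\mu)=\cE_\mu(u)-\cE_\mu(\eta_\mu)$ with $u=T_{av\gamma}^{-1}\psi$ and $\cE_\mu(u)=\tfrac12\int(|\nabla u|^2+\mu|u|^2)-F(u)$, i.e.\ the potential energy is \emph{subtracted} (and the boost term $-\tfrac12 v\cdot\langle i\psi,\nabla\psi\rangle$ is automatically included by working in the moving frame). Then $\partial_tH_V(\psi)=\tfrac12\langle\psi,\partial_tV\psi\rangle$ cancels \emph{exactly} against $\partial_t\tfrac12\int V|\psi|^2$, no Taylor expansion of $\partial_tV_h$ is ever needed, and what survives is $\tfrac12\dot\mu\,\|w\|_{L^2}^2-\langle i(\tfrac12\dot v+\nabla V_a)u,\nabla u\rangle$, which is controlled by $\nabla V=O(h)$, Ehrenfest's theorem, and the orthogonality conditions, giving the needed $O(h^2\|w\|_{H^1}+(|\alpha|+h)\|w\|_{H^1}^2)$. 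A secondary issue: as written, your functional $\cE_V+\mu\cN$ is expanded about the boosted soliton $\eta_\sigma$, which is not a critical point of the $V\equiv0$ part of that functional unless $v=0$; you need the momentum term (equivalently, the moving-frame formulation) to make the first variation vanish and to invoke the Grillakis--Shatah--Strauss coercivity. Both repairs point to the same object, namely $\cE_\mu(T_{av\gamma}^{-1}\psi)$.
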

In other words, for initial conditions close enough to a 
solitary wave solution, and for external potentials which vary slowly compared to the size of the soliton, the center of mass motion of the solitary wave is determined by Hamilton's (or Newton's) equations
of motion for a point particle in the  external potential, up to
small corrections corresponding to radiation damping. The same result holds for more general nonlinearities, see Section \ref{sec:MainSect}.

The organization of this paper is the following. In Section \ref{sec:MainSect}, we list the assumptions on the nonlinearity and the potential and state the main result of the paper. We also discuss models where the various assumptions are satisfied. In Section \ref{sec:Well-Posedness}, we discuss the well-posedness of a generalized nonautonomous nonlinear Schr\"odinger equation, such that (\ref{eq:NLSE}) corresponds to the special case when only the potential is time-dependent. We then recall basic useful properties of the nonlinear Schr\"odinger equation and the soliton manifold in Section \ref{sec:NLSeq}. In Section \ref{sec:ProofMain}, we prove the main result of the paper, Theorem \ref{th:Main}. Finally, in Section \ref{sec:Applications}, we discuss two physical applications of our analysis. The first is the adiabatic transportation of solitons, and the second is Mathieu instability of trapped solitons due to time-periodic perturbations. 

\vspace{0.5cm}
\noindent {\bf Acknowledgements}

I thank the Center for Theoretical Studies at ETH Zurich, particularly J\"urg Fr\"ohlich, for their very kind hospitality. The partial financial support of NSERC grant NA 7901 is gratefully acknowledged.

\section{Main Result}\label{sec:MainSect}

In this section, we precisely state the main result of this paper after listing our assumptions.

\subsection{The Model}\label{sec:Assumptions}

We now list our assumptions and discuss models where they are satisfied.

\begin{itemize}
\item[(A1)] {\it Nonlinearity.} 
The nonlinearity $f=f_1+\cdots+f_k$ such that 
\begin{equation*}
f_j\in C^2(\Hone,\HmOne), \ \ j=1,\cdots ,k,
\end{equation*} 
satisfy the following. 
\begin{equation*}
\exists F_j \in C^3(\Hone,\bbR),
\end{equation*}   
with $F'_j=f_j,$ where the prime stands for the Fr\'echet derivative. 
\begin{equation*}
\exists r_j\in [2,\frac{2N}{N-2}), \ \ ([2,\infty], N=1),
\end{equation*} 
such that $\forall M>0, \exists$ a finite constant $C_j(M)$ such that 
\begin{equation*}
\| f_j(u)-f_j(v)\|_{L^{r_j'}}\le C_j(M)\|u-v\|_{L^{r_j}},
\end{equation*}
$\forall u,v\in \Hone, \|u\|_{H^1}+\|v\|_{H^1}\le M.$ Furthermore, 
\begin{equation*}
\Im f_j(u)\overline{u}=0
\end{equation*}
almost everywhere on $\bbR^N, \forall u\in \Hone.$
Let $F:= \sum_{j=1}^k F_j.$ For every $M>0,$ there exists a positive constant $C(M)$ and $\epsilon \in (0,1)$ such that
\begin{equation*}
F(u)\le \frac{1-\epsilon}{2}\|u\|_{H^1} + C(M), \forall u\in H^1(\bbR^N)
\end{equation*}
such that $\|u\|_{L^2}\le M.$ Furthermore, 
\begin{align*}
&\sup_{\|u\|_{H^1}\le M} \|F''(u)\|_{{\mathcal B}(H^1,H^{-1})} < \infty\\
&\sup_{\|u\|_{H^1}\le M} \|F'''(u)\|_{H^1\rightarrow {\mathcal B}(H^1,H^{-1})} \le \infty,
\end{align*}
where ${\mathcal B}$ denotes the space of bounded operators.

\item[(A2)] {\it External Potential.}
The external potential satisfies
\begin{equation*}
V_h(t,x) = V(t,hx), \ \ V(t,x) \in C^1(\bbR, C^2(\bbR^N))
\end{equation*}
such that 
\begin{equation*}
\partial_x^{\overline{\alpha}} V \in L^\infty(\bbR,L^\infty(\bbR^N)+ L^p(\bbR^N)) , \ \ |\overline{\alpha}|\le 1,
\end{equation*}
$p> \frac{N}{2}, p\ge1,$
and 
\begin{equation*}
\partial_t V_h(t,x)\in L^\infty(\bbR,L^\infty(\bbR^N)) .
\end{equation*}

\item[(A3)]{\it Symmetries.} The nonlinearity $F$ satisfies $F(T\cdot)=F(\cdot),$ where $T$ is a translation 
\begin{equation*}
T_a^{tr}: u(x)\rightarrow u(x-a), \ \ a\in \bbR^N,
\end{equation*}
a rotation 
\begin{equation*}
T_R^r: u(x)\rightarrow u(R^{-1}x), \ \ R\in SO(N),
\end{equation*}
a gauge transformation
\begin{equation*}
T_\gamma^g : u(x)\rightarrow e^{i\gamma}u(x), \ \ \gamma \in [0,2\pi),
\end{equation*}
a boost 
\begin{equation*}
T_v^b: u(x)\rightarrow e^{\frac{i}{2}v\cdot x} u(x), \ \ v\in \bbR^N
\end{equation*}
or a complex conjugation
\begin{equation*}
T^c: u(x)\rightarrow \overline{u}(x).
\end{equation*}

\item[(A4)]{\it Solitary Wave.} $\exists I\subset \bbR$ such that $\forall \mu\in I,$ the nonlinear eigenvalue problem 
\begin{equation*}
(-\Delta +\mu)\eta_\mu -f(\eta_\mu)=0
\end{equation*}
has a positive, spherically symmetric solution $\eta_\mu\in L^2(\bbR^N)\cap C^2(\bbR^N),$ such that 
\begin{equation*}
\||x|^3\eta_\mu\|_{L^2}+\||x|^2|\nabla \eta_\mu|\|_{L^2} + \||x|^2\partial_\mu \eta_\mu\|_{L^2}<\infty, \forall \mu \in I.
\end{equation*}

\item[(A5)]{\it Orbital Stability.} The solution $\eta_\mu$ appearing in assumption (A4) satisfies
\begin{equation*}
\partial_\mu \int dx ~ \eta_\mu^2 >0, \forall \mu\in I.
\end{equation*} 

\item[(A6)] {\it Null Space Condition.} We define
\begin{equation*}
\cL_\mu := -\Delta +\mu -f'(\eta_\mu),
\end{equation*}
which is the Fr\'{e}chet derivative of the map $\psi\rightarrow (-\Delta+\mu)\psi -f(\psi)$ evaluated at $\eta_\mu.$ For all $\mu\in I,$ the null space
\begin{equation*}
\cN(\cL_\mu) = span\{i\eta_\mu, \partial_{x_j}\eta_\mu, j=1,\cdots, N\}.
\end{equation*}

\end{itemize}

\begin{remark}\label{rm:Nonlinearity}
Assumptions (A1) and (A2) are sufficient to establish global well-posedness of the nonlinear Schr\"odinger equation in $H^1,$ see Section \ref{sec:Well-Posedness}. Moreover, (A1) implies 
\begin{align*}
&|F(u+v)-F(u)-\langle F'(u), v\rangle | \le C(M) \|v\|_{H^1}^2 \\
&|F(u+v)-F(u)-\langle F'(u), v\rangle- \frac{1}{2}\langle F''(u)v,v\rangle| \le C(M)\|v\|_{H^1}^3 \\
&\|F'(u+v)-F'(u)-F''(u)v\|_{H^{-1}}\le C(M)\|v\|_{H^1}^2 ,
\end{align*}
for any $u\in H^2(\bbR^N)$ and $v\in H^1(\bbR^N)$ such that $\|u\|_{H^1}+\|v\|_{H^1}\le M.$ 
\end{remark}

\begin{remark}
 Assumption (A1) is satisfied, for example, if 
\begin{equation*}
F(u)= \frac{1}{2}\int dx ~ G(|u|^2) + W\star|u|^2,
\end{equation*}
where $G(r)=\int_0^r ds g(s),$ such that $g\in C^2(\bbR^+)$ with $g(s)\le C(1+s^\alpha),\alpha\in [0,\frac{2}{N}),$ $|\partial_s^k g(s)|\le C(1+s^{q-k}), k=0,1,2, q\in [0,\frac{2}{N-2}), N\ge 3,$ $q\in [0,\infty), N=1,2,$ and $W\in L^p+L^\infty, p>\frac{N}{2}, p\ge 1,$ such that $\max(0,W)\in L^r(\bbR^N)+L^\infty(\bbR^N), r> \frac{N}{2},\ge 1, N\ge 2,$ see \cite{Ca1}. 

Assumption (A3) follows if $W(r)=W(|r|).$ Assumption (A4) is satisfied for local nonlinearities if 
\begin{align*}
   -\infty &< \lim_{s\rightarrow 0} g(s) < \mu \\
   -\infty &\le\lim_{s\rightarrow\infty}s^{-\alpha}g(s) \le C ,  
\end{align*}   
where $0<\alpha < 2/(N-2)$, when $N>2$ and
$\alpha \in(0,\infty)$ if  $N=1,2,$ such that  
\begin{equation*}
\exists \zeta>0, \ \text{such that}\ 
        \int^\zeta_0ds  g(s) >\mu\zeta,
\end{equation*}
see for example \cite{BL1,BL2,BLP1,St1,Ca1}. Moreover, (A4) is satisfied for nonlocal nonlinearities if, in addition to the above, 
\begin{equation*}
    W\in L^{q}_{\mathrm{loc}},\ q\ge\frac{N}{2},\ 
W\rightarrow 0\ \text{as}\ \|x\|\rightarrow \infty;
\end{equation*}
see \cite{Ca1,GV2,FTY1,FTY2}. Assumption (A5) imply orbital stability of the solitary wave solution, see \cite{GSS1,GSS2}. It is satisfied for local nonlinearities $f(\psi)=\lambda |\psi|^{s}\psi, s<\frac{4}{N}.$ 
Assumption (A6) is satisfied for local nonlinearities if 
\begin{equation*}
g'(s)+g''(s)s^2 >0,
\end{equation*}
or if $N=1,$ \cite{FJGS1, We1}.

\end{remark}

%%%%%%%%%%%%%%%%%%%%%%%%%%%%%%%%%%%%%%%%%%%%%%%%%%%%%
%Statement of the main result %%%%%%%%%%%%%%%%%%%%%%%
%%%%%%%%%%%%%%%%%%%%%%%%%%%%%%%%%%%%%%%%%%%%%%%%%%%%%

\subsection{Statement of the Main Result}\label{sec:Main}

In this subsection, we state the main result of this paper, which will be proven in Section \ref{sec:ProofMain}.

\begin{theorem}\label{th:Main}
Suppose assumptions (A1)-(A7) hold.  Given $h>0$ such that $h\ll 1,$ suppose there exists $\sigma_0=\{a_0,v_0,\gamma_0,\mu_0\}\in \bbR^N\times\bbR^N\times [0,2\pi)\times I$ such that the initial condition $\psi_0\in H^1$ with 
\begin{equation*}
\| e^{-\frac{i}{2}v_0\cdot x}(\psi_0-\eta_{\sigma_0})\|_{H^1}<h.
\end{equation*}
Fix $\epsilon \in (0,1).$
Then, for $h\ll 1,$ there exists an absolute positive constant $C,$ which is independent of $h$ and $\epsilon,$ but which might depend on $\sigma_0,$ such that the solution to the nonlinear Schr\"odinger equation (\ref{eq:NLSE}) with initial condition $\psi_0$ can be written, for all time $t\in [0,C \epsilon |\log h|/h),$ as
\begin{equation*}
\psi(x,t)= e^{\frac{i}{2}v\cdot (x-a)+i\gamma} (\eta_\mu (x-a)+w(x-a,t)),
\end{equation*}
where $\|w\|_{H^1}=O(h^{1-\frac{\epsilon}{2}}),$ and the parameters $a,v,\gamma$ and $\mu$ satisfy the differential equations
\begin{align*}
\partial_t a &= v+ O(h^{2-\epsilon}) \\
\partial_t v &= -2 (\nabla V)(t,a) + O(h^{2-\epsilon})\\
\partial_t \gamma &= \mu -V(t,a)+\frac{1}{4}v^2 + O(h^{2-\epsilon})\\
\partial_t \mu &= O(h^{2-\epsilon}).
\end{align*}

\end{theorem}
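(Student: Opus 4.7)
The plan is to adapt the modulation-theoretic framework of \cite{FJGS1}, developed for time-independent potentials, to the nonautonomous setting. First, in a tubular neighborhood of the soliton manifold I parametrize the solution as
\begin{equation*}
\psi(x,t) = e^{i(\frac{1}{2}v(t)\cdot(x-a(t))+\gamma(t))}\bigl(\eta_{\mu(t)}(x-a(t)) + w(x-a(t),t)\bigr),
\end{equation*}
and impose the four families of orthogonality conditions
\begin{equation*}
\langle w, i\eta_\mu\rangle = \langle w, \partial_{y_j}\eta_\mu\rangle = \langle w, iy_j\eta_\mu\rangle = \langle w, \partial_\mu\eta_\mu\rangle = 0, \quad j=1,\dots,N,
\end{equation*}
corresponding to gauge, translation, boost, and scaling directions. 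By (A4)--(A6) and the implicit function theorem, these conditions uniquely determine $\sigma(t)=(a,v,\gamma,\mu)(t)$ as long as $\|w\|_{H^1}$ remains small. Differentiating each constraint in $t$ and pairing the moving-frame evolution equation for $w$ against the corresponding tangent vector produces a closed ODE system for $\dot\sigma$. Taylor expanding $V_h(t,y+a)=V(t,ha)+h(\nabla V)(t,ha)\cdot y + O(h^2|y|^2)$ and using the weighted decay $\||y|^k \eta_\mu\|_{L^2}<\infty$ guaranteed by (A4), this system reduces to the Hamilton-type equations stated in the theorem, with error terms that are \emph{provisionally} $O(h\|w\|_{H^1}+\|w\|_{H^1}^2+h^2)$.

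To control $w$ I introduce the Lyapunov-type functional
\begin{equation*}
\cM(\psi,\sigma,t) := E(\psi) - v\cdot P(\psi) + \bigl(\tfrac14 v^2 + \mu\bigr) N(\psi) + \int V_h(t,x)|\psi|^2\,dx,
\end{equation*}
where $E$, $P$, $N$ denote the free energy, momentum and mass, built so that $\eta_\sigma := e^{i(\frac{1}{2}v\cdot(x-a)+\gamma)}\eta_\mu(x-a)$ is a critical point in $\psi$ for each fixed $\sigma$, $t$. Expanding $\cM(\psi,\sigma,t)-\cM(\eta_\sigma,\sigma,t)$ annihilates the linear term by the nonlinear eigenvalue equation in (A4) and yields, at quadratic order, $\tfrac12\langle \cL_\mu w, w\rangle$ plus an $O(h\|w\|_{H^1})$ perturbation from the potential piece. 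Coercivity of $\cL_\mu$ on the orthogonal complement of its generalized null space, guaranteed by (A5)--(A6) and the Weinstein-type arguments of \cite{We1,GSS1}, then gives
\begin{equation*}
\cM(\psi,\sigma,t) - \cM(\eta_\sigma,\sigma,t) \ge c\|w\|_{H^1}^2 - C\bigl(h\|w\|_{H^1}+\|w\|_{H^1}^3\bigr).
\end{equation*}

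The crux is to bound $\tfrac{d}{dt}\cM$. Direct computation using (\ref{eq:NLSE}) gives
\begin{equation*}
\tfrac{d}{dt}\cM(\psi(t),\sigma(t),t) = \int (\partial_t V_h)|\psi|^2\,dx + \dot\sigma\cdot\partial_\sigma\cM,
\end{equation*}
the second term being controlled by the modulation equations derived above. The \emph{main obstacle} --- the genuinely new point relative to the autonomous case in \cite{FJGS1} --- is the first term: by (A3), $\|\partial_t V_h\|_{L^\infty}$ is only $O(1)$, so this contribution is not manifestly small. The resolution is to exploit conservation of $N$ together with the spherical symmetry of $\eta_\mu$: write $\int \partial_t V_h |\psi|^2 = \partial_t V_h(t,a)N(\psi) + \int(\partial_t V_h(t,x)-\partial_t V_h(t,a))|\psi|^2$; the first piece is a slowly varying function times a conserved quantity and can be absorbed by modifying $\cM$ (equivalently, by the $-V(t,a)$ contribution to $\dot\gamma$), while the second piece, Taylor expanded about $x=a$, is bounded by $Ch+Ch\|w\|_{H^1}$ because $\eta_\mu$ is even and its weighted moments in (A4) are finite. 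Combining yields $|\tfrac{d}{dt}\cM|\le Ch\|w\|_{H^1}^2+Ch^3$, whence by coercivity $\tfrac{d}{dt}\|w\|_{H^1}^2 \le Ch\|w\|_{H^1}^2+Ch^3$. Since $\|w(0)\|_{H^1}^2=O(h^2)$, Gronwall's inequality gives $\|w(t)\|_{H^1}^2\le Ch^2 e^{Cht}$, which remains $O(h^{2-\epsilon})$ precisely on $[0,C^{-1}\epsilon|\log h|/h]$. A standard continuity/bootstrap argument --- in which smallness of $\|w\|_{H^1}$ retroactively justifies the implicit function theorem underlying the orthogonal decomposition --- closes the proof and promotes the provisional error terms in the modulation equations to the stated $O(h^{2-\epsilon})$.
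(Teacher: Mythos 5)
Your overall architecture --- skew\mbox{-}orthogonal modulation decomposition, Hamilton\mbox{-}type modulation equations with provisional errors, a coercive Lyapunov functional, and an iteration/Gronwall step that converts an $O(1/h)$ window into the $O(|\log h|/h)$ time scale --- is the same as the paper's. The gap is in the one step that is genuinely new relative to \cite{FJGS1}, and you correctly flag it as the main obstacle but do not actually overcome it: the term $\int (\partial_t V_h)|\psi|^2$ in $\frac{d}{dt}\cM$. Your proposed decomposition leaves the piece $\int(\partial_t V_h(t,x)-\partial_t V_h(t,a))|w|^2$, which can only be bounded by $C\|w\|_{L^2}^2$, since $w$ has no spatial localization and $\partial_t V_h$ carries \emph{no} smallness in $h$ (assumption (\ref{eq:Pot3}) only gives $\partial_t V_h\in L^\infty L^\infty$; in particular no bound on $\nabla\partial_t V$ is assumed, so even your Taylor expansion of the $\eta_\mu$ and cross terms uses hypotheses the theorem does not grant). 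With $\|w\|\sim h$ this contributes $O(h^2)$ \emph{per unit time}, which over $t\sim |\log h|/h$ accumulates to $O(h|\log h|)$ in $\cM$ and hence, via coercivity, only $\|w\|\lesssim (h|\log h|)^{1/2}$ --- far short of $h^{1-\epsilon/2}$. Your final differential inequality $|\frac{d}{dt}\cM|\le Ch\|w\|_{H^1}^2+Ch^3$ therefore does not follow from your own intermediate bound of ``$Ch+Ch\|w\|_{H^1}$''; the ``combining yields'' step is where the argument breaks.

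The paper's resolution is structural rather than perturbative: the Lyapunov functional is $\cC_\mu(u,\eta_\mu)=\cE_\mu(u)-\cE_\mu(\eta_\mu)$ with $\cE_\mu$ containing \emph{no} potential term. Rewritten in the lab frame (eq.~(\ref{eq:CMEnergy})), $\cE_\mu(u)=H_V(\psi)+\frac12(\frac14 v^2+\mu)\|\psi\|_{L^2}^2-\frac12 v\cdot\langle i\psi,\nabla\psi\rangle-\frac12\int V|\psi|^2$, and the two $\partial_t V$ contributions --- $\frac12\langle\psi,\partial_t V\psi\rangle$ from $\partial_t H_V(\psi)$ in (\ref{eq:RateEnergy}) and $-\frac12\langle\psi,\partial_t V\psi\rangle$ from differentiating $-\frac12\int V|\psi|^2$ in (\ref{eq:RatePotential}) --- cancel \emph{identically}. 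This is precisely why the theorem is uniform in the rate of time variation of the potential. What remains in $\partial_t\cC_\mu$ (eq.~(\ref{eq:RateLyapunov})) involves only $\nabla V_h=O(h)$ and $\partial_t v+2\nabla V_h(a)=O(|\alpha|)$, paired against $u,\nabla u$ in combinations whose leading parts vanish by the orthogonality conditions and the spherical symmetry of $\eta_\mu$; the result is $O(h^2\|w\|_{H^1}+(|\alpha|+h)\|w\|_{H^1}^2)$ per unit time (Lemma \ref{lm:UpperBoundLF}), i.e.\ one full order of $h$ better than what your functional can deliver. That extra order is exactly what allows the $O(1/h)$ windows to be iterated $O(\epsilon|\log h|)$ times with only a constant loss $C_1$ per window, yielding $\|w\|_{H^1}^2\le C_1^{n+1}h^2\le h^{2-\epsilon}$. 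To repair your argument you would need to either drop the $+\int V_h|\psi|^2$ term from $\cM$ (adjusting so that $\eta_\sigma$ is still an approximate critical point up to $O(h^2)$ via the orthogonality conditions), or include it with the coefficient that produces the exact $\partial_t V$ cancellation.
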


We will prove this theorem in Section \ref{sec:ProofMain}. We now discuss the well-posedness of a generalized nonautonomous nonlinear Schr\"odinger equation.

%%%%%%%%%%%%%%%%%%%%%%%%%%%%%%%%%%%%%%%%%%%%%%%%%%%%%%
%WELL-POSEDNESS OF THE CAUCHY PROBLEM
%%%%%%%%%%%%%%%%%%%%%%%%%%%%%%%%%%%%%%%%%%%%%%%%%%%%%%

\section{Well-posedness of a generalized nonautonomous nonlinear Schr\"odinger equation}\label{sec:Well-Posedness}

In this section, we discuss the local and global well-posedness of a generalized nonautonomous nonlinear Schr\"odinger equation with time-dependent nonlinearities and potential. We treat the potential and the nonlinearity as time-dependent perturbations. The application in this paper corresponds to the special case when only the external potential is time-dependent.

Consider the problem corresponding to a generalized nonlinear Schr\"odinger equation
\begin{equation}
\label{eq:NANLSE}
i\partial_t \psi = -\Delta\psi + g(t,\psi), \ \ \psi(t=0)=\phi,
\end{equation}
where $g$ contains both the potential and the nonlinearity. Note that $g$ can also depend on $x\in\bbR^N,$ but we drop the explicit dependence when there is no danger of confusion. In what follows, we say that $(q,r)$ is an admissible pair if 
\begin{align}
r &\in [2,\frac{2N}{N-2}), \ \ (r\in [ 2,\infty], N=1) \nonumber \\
\frac{2}{q}&=N(\frac{1}{2}-\frac{1}{r}) \label{eq:Admissible}
\end{align}
We make the following assumptions on $g.$

\begin{itemize}

\item[(B1)] The nonlinearity $g=g_1+\cdots +g_k$ such that $g_j\in C(\bbR, C(H^1,H^{-1})), j=1,\cdots, k.$  
\item[(B2)] There exist admissible pairs $(q_j,r_j),$ $j=1,\cdots k,$ such that, for every $T,M>0,$ there exist a constant $C(M)$ independent of $T,$ and $\beta$ independent of $T$ and $M,$ such that
\begin{equation*}
\|g_j(t,u)-g_j(t,v)\|_{L^{r_j'}(\bbR^N)} \le C(M)(1+T^{\beta})\|u-v\|_{L^{r_j}(\bbR^N)} ,
\end{equation*}
for all $u,v\in H^1$ with $\|u\|_{H^1}+\|v\|_{H^1}\le M,$ and $|t|<T,$ where $r'$ is the conjugate of $r, \ie , 1/r+1/r'=1.$ Furthermore, 
\begin{equation*}
\|g_j(t,u)\|_{W^{1,r_j'}} \le C(M) (1+T^\beta)  (1+\|u\|_{W^{1,r_j}})
\end{equation*}
for all $u\in H^1(\bbR^N)\cap W^{1,r}(\bbR^N)$ such that $\|u\|_{H^1}\le M$ and $|t|\le T.$

\item[(B3)] $\Im g_j(t,u)\overline{u}=0, j=1,\cdots , k,$ almost everywhere on $\bbR^N,$ for all $t\in \bbR$ and $u\in H^1.$

\item[(B4)] There exists a functional $G_j\in C(\bbR, C^1(H^1,\bbR))$ with $G_j'=g_j,$ where the prime stands for the Fr\'{e}chet derivative.  We let $G=G_1+\cdots G_k.$ For $u\in H^1,$  
\begin{equation}
\label{eq:GWP2}
|\partial_t G(t,u)|\le \tilde{C}(\|u\|_{L^2})l(t),
\end{equation}
where $\tilde{C}$ depends only on $\|u\|_{L^2}$ and the real function $l\in L^{\infty}(\bbR)$ such that $l(t)\le 1$ for almost all $t\in \bbR.$

\item[(B5)] For all $M>0,$ there exists $C(M)>0$ and $\epsilon\in (0,1),$ both independent of $t\in \bbR,$ such that 
\begin{equation}
\label{eq:GWP1}
|G(t,u)|\le \frac{1-\epsilon}{2}\|u\|_{H^1}^2 +C(M),
\end{equation}
uniformly in $t\in\bbR, \forall u\in H^1,$ such that $\|u\|_{L^2}\le M.$ 
\end{itemize}

%%%%%%%%%%%%%%%%%%%%%%%%%%%%%%%%%%%%%%%%%%%%%%%%%%%%%%%%%%%%%%%%%%%%%%%%%%%%

%%%%%%%%%%%%%%%%%%%%%%%%%%%%%%%%%%%%%%%%%%%%%%%%%%%%%%%%%%%%%%%%%%%%%%%%%%%%%%%%%%%%%%%%%%%%%%%%%%%%%%%%

We first prove local well-posedness of the Cauchy problem by extending Kato's method, which is based on Strichartz estimates and a fixed point argument, \cite{Ka1,Ka2}. Proving global well-posedness for data which are not necessarily small is a little bit more delicate, since energy is not conserved. 

\begin{proposition}\label{pr:LWP}
Suppose $g$ satisfies assumptions (B1)-(B3). Then the following holds.
\begin{itemize}
\item[(i)] For every $\phi\in H^1(\bbR^N),$ there exists a unique, strong $H^1$-solution $u$ of (\ref{eq:NANLSE}), which is defined on a maximal time interval $(-T_*,T^*),$ such that there exists a blow-up alternative, $\ie, $ if $T^*<\infty,$ $\|u(t)\|_{H^1}\rightarrow \infty$ as $t\nearrow T^*,$ and if $T_*<\infty,$   $\|u(t)\|_{H^1}\rightarrow \infty$ as $t\searrow -T_*.$ Moreover,
\begin{equation*}
u\in L^a_{loc}((-T_*,T^*), W^{1,b}(\bbR^N)),
\end{equation*}
for all admissible pairs $(a,b).$ 
\item[(ii)] The charge is conserved,
\begin{equation*}
\|u(t)\|_{L^2} = \|\phi\|_{L^2},
\end{equation*}
for all $t\in (-T_*,T^*).$ 
\item[(iii)] $u$ depends continuously on $\phi \ \ :$ If $\phi_n\stackrel{n\rightarrow\infty}{\rightarrow} \phi$ in $H^1,$ and if $u_n$ is the maximal solution of (\ref{eq:NANLSE}) corresponding to the initial condition $\phi_n,$ then $u_n\stackrel{n\rightarrow\infty}{\rightarrow} u$ in $C([-S,T], L^p(\bbR^N))$ for every compact interval $[-S,T]\subset (-T_*,T^*)$ and $p\in [2,\frac{2N}{N-2}) \ \ (p\in [2,\infty) , N=1).$ 

\end{itemize}
\end{proposition}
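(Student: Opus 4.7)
The approach is to recast (\ref{eq:NANLSE}) in Duhamel form,
\[
u(t) = e^{it\Delta}\phi - i\int_0^t e^{i(t-s)\Delta}\, g(s,u(s))\, ds,
\]
and run a contraction-mapping argument in a space built from Strichartz norms, following Kato's scheme but keeping careful track of the $T$-dependence coming from (B2). Concretely, for $T,M>0$ to be fixed, I would set up the closed ball $E_{T,M}$ inside $L^\infty((-T,T),H^1)\cap\bigcap_j L^{q_j}((-T,T),W^{1,r_j}(\bbR^N))$ cut out by a suitable Strichartz norm $\le M$, and equip it with the weaker metric $d(u,v)=\sum_j\|u-v\|_{L^{q_j}((-T,T),L^{r_j})}$ in order to avoid the loss of a derivative in the contraction step. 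Applying the $H^1$ and $W^{1,r_j}$ Strichartz estimates to the Duhamel integral, using the second part of (B2) for the free term and the Lipschitz part for the contraction term, and finally Hölder in time, one obtains bounds of the form $\|\Phi(u)\|_{X}\le C\|\phi\|_{H^1}+C(M)(1+T^\beta)T^\theta$ and $d(\Phi(u),\Phi(v))\le C(M)(1+T^\beta)T^\theta\, d(u,v)$ for some $\theta>0$, so that choosing $T$ small enough depending on $\|\phi\|_{H^1}$ yields a unique fixed point $u$. That $u\in L^a_{\mathrm{loc}}((-T_*,T^*),W^{1,b}(\bbR^N))$ for every admissible $(a,b)$ comes from one more application of Strichartz to the equation satisfied by $u$ on an arbitrary compact subinterval.

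For part (i), the maximal existence interval $(-T_*,T^*)$ and the blow-up alternative follow from the usual continuation argument: by uniqueness, local solutions patch; if $T^*<\infty$ while $\limsup_{t\nearrow T^*}\|u(t)\|_{H^1}=:M_0<\infty$, then the local existence time from any starting point close to $T^*$ is bounded below by a constant depending only on $M_0$, contradicting maximality. For part (ii), charge conservation is obtained formally by multiplying (\ref{eq:NANLSE}) by $\overline{u}$, integrating on $\bbR^N$, and taking imaginary parts: the Laplacian contribution vanishes after integration by parts and the nonlinear term vanishes pointwise by (B3). To make this rigorous for merely $H^1$ data, I would approximate $\phi$ by $H^2$ data, use the continuous dependence built into the fixed-point construction, and pass to the limit, interpreting the nonlinear pairing via the $H^1/H^{-1}$ duality and (B1). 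For part (iii), the contraction estimate with the metric $d$, applied to $u_n-u$ with initial data $\phi_n-\phi$, gives $L^{q_j}((-S,T),L^{r_j})$ convergence on any compact subinterval $[-S,T]\subset(-T_*,T^*)$; combined with the uniform $L^\infty_t L^2_x$ bound from (ii) and interpolation, this upgrades to convergence in $C([-S,T],L^p(\bbR^N))$ for $p\in[2,2N/(N-2))$.

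The main obstacle I expect is the bookkeeping of the factor $C(M)(1+T^\beta)$ in (B2), which forces the existence time to shrink not only with $\|\phi\|_{H^1}$ but also mildly with the global time horizon; this is harmless for local well-posedness but means the patching step must be redone on each new interval with updated constants. A secondary subtlety, and in fact the reason the scheme goes through at all, is the dichotomy between the two bounds in (B2): the $W^{1,r'_j}$-bound on $g(t,u)$ is what feeds Strichartz in the $W^{1,\cdot}$ scale to yield $H^1$ regularity, while the weaker $L^{r'_j}$-Lipschitz bound, used in the derivative-free metric $d$, is what actually produces a contraction without losing a derivative. (Global well-posedness, not part of this proposition, will require the energy-type control furnished by (B4)--(B5) together with the blow-up alternative proved here.)
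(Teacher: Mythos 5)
Your proposal is correct and follows essentially the same route as the paper: Kato's fixed-point scheme in a Strichartz-based space with a derivative-free metric to avoid losing a derivative in the contraction, Strichartz plus the two-tier bounds in (B2) with the $T^{\delta}$ gain from H\"older in time, the standard continuation argument for the blow-up alternative, regularization for charge conservation, and the contraction estimate plus Gagliardo--Nirenberg interpolation for continuous dependence. The only (immaterial) differences are that the paper works with the single pair $(q,r)$, $r=\max_j r_j$, and interpolates to recover the $L^{q_j}W^{1,r_j}$ norms, rather than intersecting all the $L^{q_j}W^{1,r_j}$ spaces as you do.
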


\begin{proof}
\noindent{\it (i).}
We set $r=\max (r_1,\cdots ,r_k)$ and consider the admissible pair $(q,r),$ which satisfies (\ref{eq:Admissible}). For {\it fixed} $M,T>0,$ we introduce the space 
\begin{align*}
\sY:= \{& u\in L^\infty ((-T,T),H^1(\bbR^N))\cap L^q((-T,T), W^{1,r}(\bbR^N)):\ \ \|u\|_{L^\infty((-T,T),H^1)}\le M, \\ &\|u\|_{L^q((-T,T),W^{1,r})} \le M\} 
\end{align*}
with distance 
\begin{equation}
\label{eq:DistanceY}
d(u,v):= \|u-v\|_{L^\infty((-T,T),L^2)} + \|u-v\|_{L^q((-T,T),L^r)}.
\end{equation}

We note that $(\sY,d)$ is a complete metric space. To see this, consider a sequence $\{u_n\}_{n\in {\mathbb N}}\subset \sY$ such that $d(u_n,u)\rightarrow 0$ as $n\rightarrow \infty.$ Then there exist two subsequences $\{u_{n_k}\}$ and $\{u_{n_{k'}}\}$ such that $u_{n_k}(t)\stackrel{k\rightarrow\infty}{\rightarrow} u(t)$ in $L^2(\bbR^N)$ and $u_{n_{k'}}(t)\stackrel{k'\rightarrow\infty}{\rightarrow}u(t)$ in $L^r(\bbR^N)$ for almost all $t\in (-T,T).$ It follows that  
\begin{align*}
\|u\|_{L^\infty((-T,T),H^1)} &\le \liminf_{k\rightarrow\infty} \|u_{n_k}\|_{L^\infty((-T,T),H^1)}\\
\|u\|_{L^q((-T,T),W^{1,r})} &\le \liminf_{k'\rightarrow\infty} \|u_{n_{k'}}\|_{L^q((-T,T), W^{1,r})}
\end{align*}
and  $u\in L^\infty((-T,T),H^1(\bbR^N))\cap L^q((-T,T), W^{1,r}(\bbR^N));$ see for example \cite{Ca1}, Chapter 2. Therefore, $u\in \mathsf{Y}.$

We divide the proof into five steps.

\noindent{\it Step 1. Existence .} Consider the selfadjoint operator $H_0:= -\Delta,$ with domain $H^2(\bbR^N),$ and let $U$ be the unitary propagator generated by $H_0,$ which is given by
\begin{equation*}
U(t,s)=e^{-iH_0(t-s)}, \ \ t,s\in \bbR.
\end{equation*}
Given $\phi\in H^1(\bbR^N),$ we introduce the mapping $\varphi$ which is given by 
\begin{equation}
\label{eq:EvolMapping}
\varphi(u)(t) := U(t,0)\phi - i \int_0^t ds ~ U(t,s)g(s,u(s)).
\end{equation}
We will show that for a suitable choice of $M$ and $T,$ $\varphi$ is a strict contraction on $\sY.$ Existence of a solution of (\ref{eq:NANLSE}) will then follow from Banach's fixed point theorem.

%%%%%%%%%%%%%%%%%%%%%%%%%%%%%%%%%%%%%%%%%%%%

For $r_j$ appearing in Assumption (B2), we choose $q_j$ such that $(q_j,r_j)$ are admissible pairs, $j=1, \cdots, k.$ Applying H\"older's inequality (in space) gives 
\begin{equation*}
\|v\|_{W^{1,r_j}} \le \|v\|_{H^1}^{\frac{2(r-r_j)}{r_j(r-2)}} \|v\|_{W^{1,r}}^{\frac{r(r_j-2)}{r_j(r-2)}} ,
\end{equation*} 
and it follows by applying H\"older's inequality in time that 
\begin{equation*}
\|v\|_{L^{q_j}((-T,T),W^{1,r_j})} \le \|v\|_{L^\infty((-T,T),H^1)}^{\frac{2(r-r_j)}{r_j(r-2)}} \| \|v\|_{L^q((-T,T), W^{1,r})}^{\frac{r(r_j-2)}{r_j(r-2)}} .
\end{equation*}
Therefore, if $u\in \sY,$ then 
\begin{equation*}
\|u\|_{L^{q_j}((-T,T),W^{1,r_j})} \le M^{\frac{2(r-r_j)}{r_j(r-2)}} M^{\frac{r(r_j-2)}{r_j(r-2)}} = M, \ \ j=1,\cdots ,k,
\end{equation*}
and $u\in L^{q_j}((-T,T),W^{1,r_j}(\bbR^N)), j=1,\cdots, k.$ It follows from Assumption (B2) that $g_j\in  L^{q_j}((-T,T), W^{1,r_j'}(\bbR^N))$ such that  
\begin{equation*}
\|g_j(t,u)\|_{L^{q_j}((-T,T),W^{1,r_j'})} \le C(M)(1+T^\beta) (T^{\frac{1}{q_j}}+ M)
\end{equation*}
for $u\in \sY.$ It follows that 
\begin{equation}
\label{eq:NLUpperBd}
\|g_j(t,u)\|_{L^{q_j'}((-T,T),W^{1,r_j'})} \le C(M)(1+T^\beta) (T^{\frac{1}{q_j}}+ M)T^{\frac{q_j-q_j'}{q_jq_j'}}, \ \ j=1,\cdots k.
\end{equation}
Together with Strichartz estimates \footnote{We recall the results of Strichartz theorem, see \cite{Str1,GV3} and also \cite{KeTa1}. For every $\phi\in L^2(\bbR^N)$ and every admissible pair $(q,r),$ the function $t\rightarrow U(t,0)\phi \in L^q(\bbR,L^r(\bbR^N))\cap C(\bbR, L^2(\bbR^N)),$ such that 
\begin{equation*}
\|U(\cdot,0)\phi\|_{L^q(\bbR, L^r)} \le C\|\phi\|_{L^2}, \ \ \forall \phi \in L^2(\bbR^N),  
\end{equation*} 
where $C$ is a constant that depends on $q.$ Consider $I\subset \bbR$ such that $0\in I.$ Let $J\subset \overline{I}$ such that $0\in J,$ where $\overline{\cdot}$ denotes the closure. Let $(\gamma, \rho)$ be an admissible pair, and $f\in L^{\gamma'}(I, L^{\rho'}(\bbR^N)).$ Then, for all admissible pairs $(q,r),$ the function 
\begin{equation*}
t\rightarrow \Phi_f(t) = \int_0^t ds U(t,s) f(s) \in L^q(I,L^r(\bbR^N))\cap C(J,L^2(\bbR^N)),
\end{equation*}
such that 
\begin{equation*}
\|\Phi_f\|_{L^q(I,L^r)} \le C \|f\|_{L^{\gamma'}(I,L^{\rho'})}, 
\end{equation*}
where $C$ is a constant independent of $I$ and depends on $q$ and $\gamma$ only.},
we have
\begin{equation}
\label{eq:LCont}
\|\varphi (u)\|_{L^\infty((-T,T),H^1)} + \|\varphi(u)\|_{L^q((-T,T), W^{1,r})} \le C'\|\phi\|_{H^1} + C' C(M) (1+M)T^{\delta},
\end{equation}
for $u\in \sY$ and $T\le 1,$ where $C'$ is a positive constant independent of $T$ and $M$, and $\delta = \min_{j\in\{1,\cdots k\}}\frac{q_j-q_j'}{q_jq_j'}.$ Note that it follows from (\ref{eq:Admissible}) that $\delta>0.$ Furthermore, it follows from Strichartz theorem that $\varphi(u)\in C([-T,T],H^1(\bbR^N)).$ We choose $M$ and $T\le 1$ such that 
\begin{equation}
\label{eq:ChoiceMT}
C'\|\phi\|_{H^1}\le M/2,\ \ C' C(M) (1+M)T^\delta < M/2.
\end{equation}
For this choice of $T$ and $M,$ $\varphi (u)\in \sY$ for all $u\in \sY.$ 

It also follows from Assumption (B2) that 
\begin{equation*}
\|g_j(t,u)-g_j(t,v)\|_{L^{q_j}((-T,T),L^{r_j'})} \le C(M)(1+T^{\beta})\|u-v\|_{L^{q_j}((-T,T),L^{r_j})}, 
\end{equation*}
for $u,v\in\sY,$ and hence 
\begin{equation*}
\|g_j(t,u)-g_j(t,v)\|_{L^{q_j'}((-T,T),L^{r_j'})} \le C(M)(1+T^{\beta})T^{\frac{q_j-q_j'}{q_jq_j'}}\|u-v\|_{L^{q_j'}((-T,T),L^{r_j})}.
\end{equation*}
Applying Strichartz estimates, we have 
\begin{equation*}
\|\varphi(u) -\varphi(v)\|_{L^q((-T,T),L^r)} + \|\varphi(u) -\varphi(v)\|_{L^\infty((-T,T),L^2)} \le C' C(M)(1+T^\beta)T^\delta d(u,v),
\end{equation*}
where $d(\cdot,\cdot)$ is defined in (\ref{eq:DistanceY}) and $C',\delta$ appear in (\ref{eq:LCont}). If $M$ and $T$ satisfy (\ref{eq:ChoiceMT}) then $C' C(M)(1+T^\beta)T^\delta<\frac{1}{2},$ and hence the mapping $\varphi$ is a strict contraction on $\sY.$ By Banach's fixed point theorem, $\varphi$ has a unique fixed point $u\in\sY,$
\begin{equation}
\label{eq:DuhamelExpansion}
u(t) = U(t,0)\phi -i\int_0^t~ds U(t,s)g(s,u(s)),
\end{equation}
for almost all $t\in (-T,T).$ Furthermore, since $\varphi(u)\in C([-T,T],H^1(\bbR^N)),$ $u\in C([-T,T], H^1(\bbR^N)).$ 

\noindent{\it Step 2. Uniqueness.} We now use the Strichartz estimates to prove uniqueness of the solution on the interval $(-T,T).$ Suppose there exists $u$ and $v$ satisfying (\ref{eq:NANLSE}) on the interval $(-T,T).$ Then
\begin{equation*} 
w(t):= u(t)-v(t) = -i\int_0^t ds ~ U(t,s) [g(s,u(s))-g(s,v(s))] .
\end{equation*}
Let 
\begin{equation*}
w_j(t) := -i\int_0^t ds~ U(t,s) [g_j(s,u(s))-g_j(s,v(s))], \ \ j=1,\cdots ,k.
\end{equation*}
It follows from Strichartz theorem that 
\begin{align*}
& \|\int_0^t ds~ U(t,s)[g_j(s,u(s))-g_j(s,v(s))] \|_{L^{q_j} ((-T,T), L^{r_j})} \\ &\le C'\| g_j(s,u(s))-g_j(s,v(s))\|_{L^{q_j'}((-T,T), L^{r_j'}} .
\end{align*}
Together with Assumption (B2), this implies
\begin{align*}
&\|\int_0^t ds~ U(t,s)[g_j(s,u(s))-g_j(s,v(s))] \|_{L^{q_j} ((-T,T), L^{r_j})}  \\ &\le C'C(M)(1+T^\beta) \| u-v \|_{L^{q_j'}((-T,T), L^{r_j})} \\ 
&\le C'C(M)(1+T^\beta) T^{\frac{q_j-q_j'}{q_jq_j'}} \|u-v\|_{L^{q_j}((-T,T),L^{r_j})}, 
\end{align*}
Choose $T<1$ small enough such that $C'C(M) T^{\frac{q_j-q_j'}{q_jq_j'}}<\frac{1}{2}, j=1,\cdots , k.$ Then
\begin{equation*}
\|w_j\|_{L^{q_j}((-T,T),L^{r_j}(\bbR^n))} < \|w_j\|_{L^{q_j}((-T,T),L^{r_j}(\bbR^n))},
\end{equation*}
which implies $w_j=0$ on $(-T,T), j=1,\cdots, k.$ 

\noindent{\it Step 3. Blow-up alternative.}
We now prove the blow-up alternative by contradiction. We define
\begin{equation}
\label{eq:MaxT}
T^*:= \{ \sup_{T\in\bbR^+} T: \ \ {\mathrm a \ \ solution \ \ for \ \  (\ref{eq:NANLSE}) \ \  exists\ \  on \ \ } [0,T^*) \}.
\end{equation}
Suppose that $T^*<\infty$ and that $\exists M<\infty$ and a sequence $\{t_i\}_{i\in\bbN}\subset [0,T^*)$ such that $t_i\stackrel{i\rightarrow\infty}{\rightarrow} T^*$ and $\|u(t_i)\|< M$ for all $i\in\bbN.$ Choose $j\in\bbN$ such that $t_j+T(M)>T^*,$ where $T(M)$ is the time scale over which Steps 1 and 2 holds. Applying the above analysis starting with $u(t_j)$ implies the existence of the solution of (\ref{eq:NANLSE}) to times $t_j+T(M),$ which contradicts (\ref{eq:MaxT}). Therefore, $\|u\|_{H^1}\rightarrow \infty$ if $t\nearrow T^*.$ We also define 
\begin{equation*}
T_*:= \{\sup_{T\in\bbR^+} T: \ \ {\mathrm a \ \  solution \ \ for \ \  (\ref{eq:NANLSE}) \ \ exists \ \ on \ \ } (-T_*,0] \}.
\end{equation*}
Using a similar argument, one can show that $\|u\|_{H^1}\rightarrow\infty$ as $t\searrow -T_*.$ 
Therefore, (\ref{eq:NANLSE}) has a blow-up alternative.  

Note that it follows from (\ref{eq:DuhamelExpansion}), (\ref{eq:NLUpperBd}) and Strichartz theorem that 
\begin{equation*}
u\in L^a_{loc}((-T_*,T^*), W^{1,b}(\bbR^N)),
\end{equation*}
for all admissible pairs $(a,b).$ 

\noindent{\it Step 4. (ii) Charge conservation.} To prove charge conservation, we use Assumption (B3) and the fact that $u\in H^1$. Using (\ref{eq:NANLSE}), we have
\begin{equation*}
\partial_t\frac{1}{2} \|u\|^2_{L^2} = \langle iu, i\partial_t u \rangle = \langle iu, -\Delta u \rangle + \langle iu, g(t,u) \rangle =0,
\end{equation*}
and hence 
\begin{equation}
\label{eq:ChargeConservation}
\|u(t)\|_{L^2} = \|\phi\|_{L^2}, \  \ t\in (-T_*,T^*).
\end{equation}

\noindent{\it Step 5. (iii) Continuous dependence.} Consider the sequence $\{\phi_n\}_{n\in\bbN}$ such that $\phi_n\rightarrow \phi$ in $H^1$ as $n\rightarrow \infty,$ and let $u_n$ be the maximal solution of (\ref{eq:NANLSE}) corresponding to the initial condition $\phi_n.$ We claim that there exists $T>0,$ which depends on $\|\phi\|_{H^1}$ only, such that $u_n$ is well defined on $[-T,T]$ for $n$ large enough, and $u_n\rightarrow u$ in $C([-T,T],L^p(\bbR^N))$ as $n\rightarrow\infty,$ $p\in [0,\frac{2N}{N-2}), (p\in [0,\infty) \ \ N=1).$ Claim {\it (iii)} follows by repeating this property to cover any compact subset of $(-T_*,T^*).$ 

Since $\phi_n\rightarrow \phi$ in $H^1$ as $n\rightarrow \infty,$ it follows that there exists $n_0\in\bbN$ such that $\|\phi_n\|_{H^1}\le 2\|\phi\|_{H^1}, \ \ \forall n\ge n_0.$ It follows from Steps 1 and 2 above that there exists $T\equiv T(\|\phi\|_{H^1})$ such that $u_n$ and $u$ are defined on $[-T,T],$ for $n\ge n_0,$ such that 
\begin{equation*}
\|u\|_{L^\infty((-T,T),H^1)} + \sup_{n\ge n_0}\|u_n\|_{L^\infty((-T,T),H^1)} \le C'\|\phi\|_{H^1},
\end{equation*}
for some positive constant $C'.$ Note that charge conservation {\it (ii)} and $\phi_n\rightarrow \phi$ in $H^1$ imply that 
\begin{equation}
\label{eq:L2Cont}
u_n\rightarrow u \ \ \mathrm{in} \ \ C([-T,T],L^2(\bbR^N)).
\end{equation}
Furthermore, it follows from (\ref{eq:EvolMapping}) that
\begin{equation*}
u(t)-u_n(t) = U(t,0)(\phi-\phi_n) +\varphi(u)(t) - \varphi (u_n)(t), \ \ t\in [-T,T].
\end{equation*} 
Using Strichartz estimates,
\begin{align*}
&\|u-u_n\|_{L^\infty((-T,T)L^2)} + \|u-u_n\|_{L^q ((-T,T),L^r)} \\ &\le C' \|\phi-\phi_n\|_{H^1} + C'C(M)(1+T^\beta) T^\delta \times \\ &\times (\|u-u_n\|_{L^\infty((-T,T),L^2)} + \|u-u_n\|_{L^q((-T,T),L^r)}),
\end{align*}
where $\delta=\min_{j\in\{1,\cdots , k\}}\frac{q_j-q_j'}{q_jq_j'}>0.$ Choosing $T$ small enough (yet still depending on $\|\phi\|_{H^1}$) such that $C'C(M)(1+T^\beta) T^\delta <\frac{1}{2},$ 
\begin{equation}
\label{eq:YCont}
\|u-u_n\|_{L^\infty((-T,T),L^2)} + \|u-u_n\|_{L^q ((-T,T),L^r)} \le 2 C' \|\phi-\phi_n\|_{H^1}.
\end{equation}
It follows from (\ref{eq:L2Cont}), (\ref{eq:YCont}) and the Gagliardo-Nirenberg inequality that
\begin{equation*}
u_n\rightarrow u \ \ \mathrm{in} \ \ C([-T,T],L^p(\bbR^N)),
\end{equation*}
for all $p\in [2,\frac{2N}{N-2})$ ($p\in [0,\infty)$ if $N=1$). This completes the proof of the proposition.

\end{proof}

We define the energy functional
\begin{equation}
\label{eq:Energy}
E(t,u):= \frac{1}{2}\int |\nabla u|^2 dx +G(t,u),
\end{equation}
for $u\in H^1(\bbR^N).$ Note that since the nonlinearity and the potential depend on time, the energy is not conserved. We have the following proposition.

\begin{proposition}\label{pr:EnergyUpperBd}
Suppose that Assumptions (B1)-(B4) are satisfied, and let $u$ denote the solution of (\ref{eq:NANLSE}) given by Proposition \ref{pr:LWP}. Then 
\begin{equation}
\label{eq:EnergyUpperBd}
|E(t,u(t))| \le |E(0,\phi)| + T \tilde{C}(\|\phi\|_{L^2}), 
\end{equation}
for all $t\in [-T,T],$ where $[-T,T]$ is a compact subset of $(-T_*,T^*),$ and $\tilde{C}(\|\phi\|_{L^2})$ appears in Assumption (B4). 
\end{proposition}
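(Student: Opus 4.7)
The plan is to differentiate $E(t,u(t))$ along the flow, observe that the Schr\"odinger structure cancels the spatial--derivative and nonlinear contributions, leaving only the explicit time--derivative $(\partial_t G)(t,u(t))$, and then integrate using Assumption (B4) and charge conservation.

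Formally, with $E(t,u) = \tfrac12 \|\nabla u\|_{L^2}^2 + G(t,u)$, the chain rule (and an integration by parts to rewrite $\Re\int \nabla u\cdot\overline{\nabla \partial_t u}$ as $\langle -\Delta u, \partial_t u\rangle$) gives
$$\frac{d}{dt} E(t,u(t)) \;=\; \langle -\Delta u + g(t,u),\, \partial_t u\rangle + (\partial_t G)(t,u),$$
with the first pairing read in the $H^{-1}$--$H^{1}$ duality. Substituting the equation $-\Delta u + g(t,u) = i\partial_t u$ and using the elementary identity $\langle iv, v\rangle = \Re\int i|v|^2 = 0$ makes the bracket vanish, so $\frac{d}{dt} E(t,u(t)) = (\partial_t G)(t,u(t))$. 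Invoking Assumption (B4) together with the charge conservation $\|u(t)\|_{L^2} = \|\phi\|_{L^2}$ from Proposition~\ref{pr:LWP}(ii) yields
$$\left| \frac{d}{dt} E(t,u(t)) \right| \;\le\; \tilde C(\|\phi\|_{L^2})\,l(t) \;\le\; \tilde C(\|\phi\|_{L^2}),$$
and integrating over $[0,t]\subset [-T,T]$, followed by the triangle inequality, delivers (\ref{eq:EnergyUpperBd}).

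The main obstacle is that $u\in C([-T,T],H^1)$ only gives $\partial_t u \in C([-T,T],H^{-1})$, so $t\mapsto E(t,u(t))$ is a priori not classically $C^1$ and the chain--rule computation above requires justification. I would handle this by regularization: pick $\{\phi_n\}\subset H^2(\bbR^N)$ with $\phi_n\to\phi$ in $H^1$, and rerun the Strichartz fixed--point argument of Proposition~\ref{pr:LWP} one regularity level higher (using the $W^{1,r_j'}$--bound in (B2) to control $\nabla g_j$) to obtain corresponding solutions $u_n\in C([-T,T],H^2)$ with $\partial_t u_n\in C([-T,T],L^2)$ for $n$ large enough. For these smoother solutions the above computation is rigorous pointwise in $t$ and yields
$$E(t,u_n(t)) - E(0,\phi_n) \;=\; \int_0^t (\partial_s G)(s, u_n(s))\,ds, \qquad |t|\le T.$$
One then passes to the limit $n\to\infty$ using Proposition~\ref{pr:LWP}(iii) (continuous dependence), continuity of $G(t,\cdot)$ on $H^1$ from (B4), and dominated convergence on the right (with uniform majorant $\tilde C(\|\phi\|_{L^2})$ arising from charge conservation applied to each $u_n$) to conclude the same integral identity for $u$, hence (\ref{eq:EnergyUpperBd}). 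The hardest step is the $H^2$--persistence for $u_n$; an alternative route that avoids it entirely is to mollify the equation spatially by $J_\varepsilon$, derive an energy identity for $J_\varepsilon u$ rigorously, and then remove the mollification using the $L^a_{loc}((-T_*,T^*), W^{1,b})$ regularity furnished by Proposition~\ref{pr:LWP}(i).
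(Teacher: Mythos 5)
Your formal computation is exactly right: the identity $\frac{d}{dt}E(t,u(t)) = (\partial_t G)(t,u(t))$ is precisely what the paper establishes (in the equivalent form (\ref{eq:RateH1})), and the final integration step using (B4) and charge conservation is the same. The issue is with your proposed justification. Your primary route --- rerunning the fixed-point argument ``one regularity level higher'' to get $u_n\in C([-T,T],H^2)$ --- is not supported by the hypotheses: Assumption (B2) only controls \emph{one} derivative of the nonlinearity (the $W^{1,r_j'}$ bound on $g_j$), so an $H^2$ contraction would require bounds on $\nabla^2 g_j(t,u)$ in terms of $\nabla^2 u$ that are simply not assumed. Moreover, even granting $H^2$ persistence for the approximants, the limit passage $E(t,u_n(t))\to E(t,u(t))$ needs convergence of $\|\nabla u_n(t)\|_{L^2}^2$, i.e.\ continuous dependence in $H^1$; Proposition \ref{pr:LWP}(iii) only delivers convergence in $C([-S,T],L^p)$ for $p<\frac{2N}{N-2}$, and upgrading this to $H^1$ is a genuinely delicate matter for NLS that the paper deliberately avoids.

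The paper sidesteps both obstacles by never differentiating $\|\nabla u(t)\|_{L^2}^2$ in time at all. Instead it writes $\nabla u(t)$ via the Duhamel formula (using that $\nabla$ commutes with the free propagator), expands the square of the $L^2$ norm, and obtains
\begin{equation*}
\|\nabla u(t)\|_{L^2}^2 = \|\nabla\phi\|_{L^2}^2 - 2\sum_{j=1}^k \Im\int_0^t ds\, \langle g_j(s,u(s)),\Delta u(s)\rangle ,
\end{equation*}
where every pairing makes sense at the Strichartz regularity $u\in L^{q_j}W^{1,r_j}$ already furnished by Proposition \ref{pr:LWP}(i) together with the $W^{1,r_j'}$ bound of (B2). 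Only at the last step is a regularization $(1-\epsilon\Delta)^{-1}$ inserted, to convert $\Im\langle g,\Delta u\rangle$ into $\Re\langle g,\partial_t u\rangle = \frac{d}{dt}G - \partial_t G$ using the equation. This is in the spirit of your ``alternative route'' via spatial mollification, but applied only to a single scalar identity rather than to the full evolution equation, so no commutator estimates between the mollifier and the nonlinearity are needed. If you want to salvage your argument, you should abandon the $H^2$-persistence step and adopt this Duhamel-squaring device, or at least work out the mollified energy identity carefully, controlling the commutators $[J_\varepsilon, g(t,\cdot)]$ with the available $W^{1,r_j}$ regularity.
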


\begin{proof}
Since Assumptions (B1)-(B3) are satisfied, the results of Proposition \ref{pr:LWP} hold. We choose a finite $T>0$ such that $T<\min(T_*,T^*).$ We know from Proposition \ref{pr:LWP} that 
\begin{equation*}
u\in L^a_{loc}((-T_*,T^*), W^{1,b}(\bbR^N)),
\end{equation*}
for all admissible pairs $(a,b).$ In particular, 
\begin{equation*}
u\in L^{q_j}((-T,T), W^{1,r_j}(\bbR^N)), \ \ j=1,\cdots ,k,
\end{equation*}
where the admissible pairs $(q_j,r_j)$ appear in Assumption (B2). We note that by Mihlin's multiplier theorem, $W^{s,p}=H^{s,p}$ for $1< p <\infty$ and $s$ an integer, see for example \cite{BerLoef1}. 

Since $\nabla$ commutes with the unitary propagator $U$ corresponding to the free time evolution, and since the $L^2$ norm is invariant under unitary transformations, we have, using the Duhamel expansion of $u$ given in (\ref{eq:DuhamelExpansion}),
\begin{align*}
&\|\nabla u(t)\|^2_{L^2} = \|\nabla U(0,t) u(t) \|^2_{L^2} \\
&= \|\nabla \phi - i \int_0^t ds~ U(0,s) \nabla g(s,u(s))\|^2_{L^2} \\
&= \|\nabla \phi\|_{L^2}^2 -2 \Im \langle \nabla \phi ,\int_0^tds~ U(0,s) \nabla g(s,u(s)) \rangle + \| \int_0^t ds~ U(0,s)\nabla g(s,u(s)) \|_{L^2}^2 \\
&= \|\nabla \phi\|_{L^2}^2 +2\Im \int_0^t ds~ \langle \nabla g(s,u(s)), U(s,0)\nabla \phi \rangle + \| \int_0^t ds~ U(0,s)\nabla g(s,u(s)) \|_{L^2}^2 .
\end{align*} 
Notice that
\begin{align*}
\| \int_0^t ds~ U(0,s)\nabla g(s,u(s)) \|_{L^2}^2 &= 2 \Re \int_0^t ds \langle \nabla g(s,u(s)), \int_0^s ds'~ U(s,s')\nabla g(s',u(s')) \rangle  \\
&= 2 \Im \int_0^tds~ \langle \nabla g(s,u(s)) , -i \int_0^s ds'~ U(s,s') \nabla g(s',u(s'))\rangle, 
\end{align*}
and hence 
\begin{align*}
\|\nabla u\|_{L^2}^2 &= \|\nabla\phi\|_{L^2}^2 + 2 \sum_{j=1}^k \Im \int_0^t ds~\langle \nabla g_j(s,u(s)) , \nabla u(s)\rangle \\
&= \|\nabla \phi\|_{L^2}^2 - 2 \sum_{j=1}^k \Im \int_0^t ds~ \langle g_j(s,u(s)), \Delta u(s)\rangle,
\end{align*}
where the scalar product is well-defined using  Assumption (B2) and duality on 
\begin{equation*}
(L^1((-T,T),H^1)+L^{q'_j}((-T,T),H^{1,r_j'}))\times (L^\infty((-T,T),H^1)\cap L^{q_j}((-T,T),H^{1,r_j}))), 
\end{equation*}
$j=1,\cdots ,k,$ see for example \cite{BerLoef1}. Now,
\begin{align*}
\Im\langle g(t,u(t)) , \Delta u(t) \rangle &= \lim_{\epsilon\searrow 0} \Im \langle (1-\epsilon\Delta)^{-1} g(t,u(t)) , (1-\epsilon \Delta)^{-1} \Delta u(t)\rangle \\
&= \lim_{\epsilon\searrow 0} \Im \langle (1-\epsilon\Delta)^{-1} g(t,u(t)), (1-\epsilon\Delta)^{-1} (-i\partial_t u(t) + g(t,u(t))) \rangle \\
&= \lim_{\epsilon\searrow 0} \Im \langle (1-\epsilon\Delta)^{-1}g(t,u(t)), -i (1-\epsilon\Delta)^{-1}\partial_t u(t) \rangle \\
&= \Re\langle g(t,u(t)), \partial_t u(t)\rangle \\
&=  \frac{d}{dt} G(t,u(t)) - (\partial_t G)(t,u(t))
\end{align*}
for almost all $t\in (-T,T),$where $G$ appears in Assumption (B4). Therefore,
\begin{equation}
\label{eq:RateH1}
\|\nabla u(t)\|_{L^2}^2 =\|\nabla \phi\|_{L^2}^2 - 2 G(t,u(t))+2 G(0,\phi) +2\int_0^t ds~ (\partial_sG)(s,u(s)). 
\end{equation}
Together with (\ref{eq:Energy}), Assumption (B4), and conservation of charge (\ref{eq:ChargeConservation}), this implies
\begin{equation*}
|E(t,u(t))| \le |E(0,\phi)| + T \tilde{C}(\|\phi\|_{L^2}), 
\end{equation*}
for $t\in [-T,T].$ The claim of the proposition follows by iterated application of this result to cover every compact subset $[-T,T]\subset (-T_*,T^*).$

\end{proof}

This is the main result of this section.

\begin{theorem}\label{th:GWP}
Suppose (B1)-(B5) hold. Then the solution $u$ of (\ref{eq:NANLSE}) with initial condition $\phi\in H^1 (\bbR^N )$ is global in $H^1,$ $\ie , T^*=T_*=\infty,$ where $T^*,T_*$ appear in Proposition \ref{pr:LWP}. 
\end{theorem}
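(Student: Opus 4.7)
The plan is to argue by contradiction using the blow-up alternative of Proposition \ref{pr:LWP}: if $T^{*}<\infty$, then $\|u(t)\|_{H^{1}}\to\infty$ as $t\nearrow T^{*}$, so it suffices to produce an a priori bound on $\|u(t)\|_{H^{1}}$ valid on any compact subinterval of $(-T_{*},T^{*})$. The bound will come from combining the conservation of charge $\|u(t)\|_{L^{2}}=\|\phi\|_{L^{2}}$ (Proposition \ref{pr:LWP}, part (ii)), the growth-controlled energy inequality from Proposition \ref{pr:EnergyUpperBd}, and the coercivity assumption (B5). Once this is in place, the symmetric argument for negative times gives $T_{*}=\infty$ as well.

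More concretely, fix a finite $T\in(0,\min(T^{*},T_{*}))$ and set $M:=\|\phi\|_{L^{2}}$. From the definition of the energy,
\begin{equation*}
\tfrac{1}{2}\|\nabla u(t)\|_{L^{2}}^{2}=E(t,u(t))-G(t,u(t)),
\end{equation*}
so for every $t\in[-T,T]$,
\begin{equation*}
\tfrac{1}{2}\|\nabla u(t)\|_{L^{2}}^{2}\le |E(t,u(t))|+|G(t,u(t))|.
\end{equation*}
Proposition \ref{pr:EnergyUpperBd} controls the first term by $|E(0,\phi)|+T\tilde{C}(M)$, while (B5), together with $\|u(t)\|_{L^{2}}=M$, yields
\begin{equation*}
|G(t,u(t))|\le \tfrac{1-\epsilon}{2}\|u(t)\|_{H^{1}}^{2}+C(M)=\tfrac{1-\epsilon}{2}\bigl(M^{2}+\|\nabla u(t)\|_{L^{2}}^{2}\bigr)+C(M),
\end{equation*}
uniformly in $t$. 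Combining and moving the $\|\nabla u\|_{L^{2}}^{2}$ term to the left gives
\begin{equation*}
\tfrac{\epsilon}{2}\|\nabla u(t)\|_{L^{2}}^{2}\le |E(0,\phi)|+T\tilde{C}(M)+\tfrac{1-\epsilon}{2}M^{2}+C(M),
\end{equation*}
so $\|u(t)\|_{H^{1}}$ is bounded by a constant depending only on $T$, $M$, $|E(0,\phi)|$ and $\epsilon$, all of which remain finite on any compact subinterval of $(-T_{*},T^{*})$.

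To conclude, suppose for contradiction that $T^{*}<\infty$. Choose $T\in(0,T^{*})$ arbitrarily close to $T^{*}$; the a priori bound above gives a uniform constant $K=K(T^{*},M,|E(0,\phi)|,\epsilon)$ with $\|u(t)\|_{H^{1}}\le K$ for all $t\in[0,T]$, hence $\limsup_{t\nearrow T^{*}}\|u(t)\|_{H^{1}}\le K<\infty$, contradicting the blow-up alternative of Proposition \ref{pr:LWP}. Therefore $T^{*}=\infty$; applying the identical argument to the backward evolution gives $T_{*}=\infty$.

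The only potentially delicate point I foresee is making sure the coercivity inequality in (B5) is applied with the right constant structure: the factor $\frac{1-\epsilon}{2}$ (strictly less than $\frac{1}{2}$) is exactly what allows the $\|\nabla u\|_{L^{2}}^{2}$ term to be absorbed into the left-hand side, producing a bound that is linear in $T$ rather than running away. This is the step that genuinely uses the strict inequality in (B5); everything else is bookkeeping on top of Propositions \ref{pr:LWP} and \ref{pr:EnergyUpperBd}.
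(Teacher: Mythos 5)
Your proof is correct and follows essentially the same route as the paper's: contradiction via the blow-up alternative, with the a priori $H^1$ bound obtained from charge conservation, the energy growth estimate of Proposition \ref{pr:EnergyUpperBd}, and absorption of the $\frac{1-\epsilon}{2}\|u\|_{H^1}^2$ term from (B5) into the left-hand side. The only difference is cosmetic: you isolate $\frac{\epsilon}{2}\|\nabla u\|_{L^2}^2$ explicitly, whereas the paper leaves the absorption implicit in the inequality for $\frac{1}{2}\|u\|_{H^1}^2$.
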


\begin{proof}
Since Propositions \ref{pr:LWP} and \ref{pr:EnergyUpperBd} follow from Assumptions (B1)-(B4), we only need to show that $\|u(t)\|_{H^1}, \ \ t\in [0,T^*)$ is finite if $T^*<\infty,$ which, together with the blow-up alternative, implies a contradiction. Suppose $T^*<\infty.$ 
Assumption (B5), (\ref{eq:ChargeConservation}) and (\ref{eq:EnergyUpperBd}) imply that
\begin{align*}
\frac{1}{2}\|u(s)\|^2_{H^1} &= \frac{1}{2}(\|u(s)\|^2_{L^2}+\|\nabla u(s)\|^2_{L^2})\\
& \le \frac{1}{2}\|u(s)\|^2_{L^2} + |E(s,u(s))| + |G(s,u(s))| \\
& < \frac{1}{2} \|\phi\|^2_{L^2} + |E(0,\phi)| + T^*\tilde{C}(\|\phi\|_{L^2}) + \frac{1-\epsilon}{2}\|u(s)\|^2_{H^1} + C(\|\phi\|_{L^2}) , 
\end{align*}
for all $s\in [0,T^*),$
and hence 
\begin{equation*}
\|u(s)\|_{H^1} <\infty,
\end{equation*}
for finite $T^*,$ which contradicts the blow-up alternative. The case of $T_*$ is proven similarly. 

\end{proof}

\begin{remark}
One can directly verify that Assumptions (A1) and (A2) in Subsection \ref{sec:Main} imply that Assumptions (B1)-(B5) are satisfied. 
\end{remark}

\begin{remark}
Assumptions (B1)-(B5) are satisfied if  
\begin{equation}
\label{eq:ExNL}
g(t,u)(\cdot)= V(t,\cdot)u(\cdot) + f(t,\cdot, u (\cdot)) + (W(t)\star|u|^2) (\cdot) u(\cdot), \ \ u\in H^1,
\end{equation}
where $V,f$ and $W$ satisfy the following; see \cite{Ca1}. 

The potential $V$ is real valued such that $V,\nabla V\in C(\bbR,L^p(\bbR^N)+L^\infty(\bbR^N)),$ with $p>\frac{N}{2}, p\ge 1,$
and $\partial_t V \in L^\infty (\bbR, L^\infty (\bbR^N)).$ Let $g_1(t,u)(\cdot)=V(t,\cdot)u(\cdot ).$ Assumption (B2) follows by Sobolev's embedding theorem with $r_1=\frac{2p}{p-1}, \beta=1,$ and Assumption (B3) follows trivially since $V$ is real valued.

The local nonlinearity $f:\bbR\times\bbR^N\times [0,\infty)\rightarrow \bbR$ such that $f(t,x,u)$ is continuous in $t,$ measurable in $x$ and continuous in $u,$ and $f(t,x,0)=0 \forall t\in \bbR$ and almost every $x\in\bbR^N.$ If $N\ge 2, \exists $ a positive constant $C$ and $\alpha \in [0,\frac{4}{N-2})$ such that 
\begin{equation*}
|f(t,x,u)-f(t,x,v)| \le C (1+|u|^\alpha+|v|^\alpha)|u-v|, 
\end{equation*}
uniformly in $t\in \bbR,$ for almost all $x\in \bbR^N, u,v\in \bbR.$ If $N=1,$ then for every $M>0,\exists L(M)$ such that 
\begin{equation*}
|f(t,x,u)-f(t,x,v)| \le L(M)|u-v|,
\end{equation*}
uniformly in $t\in \bbR,$ for almost all $x\in \bbR^N, u,v\in \bbR,
|u|+|v|\le M.$ The local nonlinearity $f$ is extended to $\bbR\times\bbR^N\times \bbC$ by defining
\begin{equation*}
f(t,x,z):=\frac{z}{|z|} f(t,x,|z|), \ \ \forall z\in \bbC\backslash \{ 0 \},
\end{equation*}
$t\in \bbR$ and for almost all $x\in\bbR^N.$ We define 
\begin{equation}
\label{eq:g2}
g_2(t,u)(\cdot):= f(t,\cdot,u), \ \ t\in\bbR, u\in H^1,
\end{equation}
and
\begin{equation*}
G_2(t,u)=\int dx \int_0^{|u|}dr f(t,x,r), \ \ u\in H^1, \ \ t\in\bbR,
\end{equation*}
Assumption (B2)  follows from Sobolev's embedding theorem $H^1\hookrightarrow L^{\alpha+2}(\bbR^N), r_2=\alpha+2.$ Assumption (B3) follows from (\ref{eq:g2}).

For the nonlocal (Hartree type) nonlinearity $(W\star |u|^2)u,$ $W$ is a real valued function, $W:\bbR\times\bbR^N\rightarrow \bbR,$ such that $W\in L^q(\bbR^N),$ $q>\frac{N}{4}, q\ge 1.$ We make the identification $g_3(t,u)=(W(t)\star |u|^2)u, \ \ t\in \bbR, \ \ u\in H^1,$ and $G_3(t,u)=\frac{1}{4}\int dx (W\star |u|^2)|u|^2.$ Assumption (B2) follows from Young's and Sobolev's inequalities, with $r_3=\frac{4q}{2q-1}.$ 
Note that the analysis can be directly extended to the case when $f$ is a finite sum of local and nonlocal nonlinearities. 

Assumptions (B4) and (B5) follow by the application of H\"older's and the Gagliardo-Nirenberg inequality if the above holds together with the following. There exists $\delta\in [0,\frac{4}{N})$ and $C>0$ constant such that  
\begin{equation*}
\int_0^{|u|} dr f(t,x,r) \le C |u|^2 (1+|u|^\delta) ,
\end{equation*}
uniformly in $t\in\bbR,$ and 
\begin{equation*}
\partial_t\int_0^{|u|} dr f(t,x,r) \le A(t)|u|^2,
\end{equation*} 
for $t\in \bbR,$ almost all $x\in\bbR^N, $ and $u\in H^1,$ where $A(t)\in L^{\infty}(\bbR).$ Moreover, $W$ in the nonlocal nonlinearity is spherically symmetric such that
\begin{equation*}
W^+:= \max(0,W)\in L^\infty(\bbR, L^\sigma(\bbR^N)+L^\infty(\bbR^N)),
\end{equation*}
$\sigma> \frac{N}{2},\sigma\ge 1,$ and 
\begin{equation*} 
\partial_t W^+\in L^{\infty}(\bbR,L^\infty(\bbR^N)).
\end{equation*}

\end{remark}

\begin{remark}
There are several ways in which one may relax condition (B4). The above result also holds for $l(t)$ appearing in (B4) in $L^p(\bbR), \ \ 1\le p\le \infty.$ If  $l(t)$ appearing in assumption (B4) is in $L^1(\bbR),$ then one can easily show that the $H^1$ norm of the solution of the nonautonomous nonlinear Schr\"odinger equation (\ref{eq:NANLSE}) is finite for all times $t\in \bbR.$ In this case, one may extend the above analysis to potentials which grow at infinity, say quadratically; see \cite{Oh1} for a discussion in the time-independent case. 

On the other hand, if $l(t)\in L^p_{loc}(\bbR), \ \ 1\le p\le \infty,$  the upper bound on the energy functional, (\ref{eq:EnergyUpperBd}) in Proposition \ref{pr:EnergyUpperBd} is replaced with 
\begin{equation*}
|E(t,u(t))| \le |E(0,\phi)| + T C'(T) \tilde{C}(\|\phi\|_{L^2}), 
\end{equation*}
where $C'(T)>0$ is finite for finite $T.$

Furthermore, if one replaces Assumption (B4) with 
\begin{equation*}
|\partial_t G(t,u)|\le \|u\|_{L^p}l(t),
\end{equation*}
such that $l(t)\in L_{loc}^{q'}(\bbR)$ and $(q,p)$ form an admissible pair, then, applying H\"older's inequality in time and using 
\begin{equation*}
u\in L^q_{loc}((-T_*,T^*), W^{1,p}(\bbR^N)),
\end{equation*}
we have 
\begin{equation*}
|E(t,u(t))| \le |E(0,\phi)| + C(T), 
\end{equation*}
where $C(T)>0$ is finite for finite $T.$ Global well-posedness follows like before.

\end{remark}

\begin{remark}
There are relatively few rigorous results on nonautonomous nonlinear Sch\"odinger equations. The Cauchy problem for a nonautonomous nonlinear Schr\"odinger equation that is obtained by applying a pseudo-conformal transformation to a nonlinear Schr\"odinger equation with local nonlinearity is studied in \cite{CaWe1}. Moreover, well-posedness of a nonlinear Schr\"odinger equation with a local nonlinearity whose coefficient is time-periodic was studied in \cite{CKP1}, where the instability of the ground state due to periodic modulation of the nonlinearity is investigated. Furthermore, (endpoint) Strichartz estimates were obtained in \cite{SR1} for time-dependent potentials which are {\it small} and concentrated in frequency space. Time-dependent potentials also arise in the analysis of charge transfer models , \cite{RSS1}, and scattering of multisolitons, \cite{RSS2}, where Strichartz estimates were applied in order to study the asymptotic stability of multisolitons. We also mention in the linear case the dispersive estimates in \cite{Ya1} for time dependent potentials that decay in time, and the analysis in \cite{Bo1} on the slow growth of Sobolev norms for the linear Schr\"odinger equation with (quasi-) periodic potentials.

\end{remark}

%%%%%%%%%%%%%%%%%%%%%%
%General Assumptions
%%%%%%%%%%%%%%%%%%%%%%

%%%%%%%%%%%%%%%%%%%%%%%%%%%%%%%%%%%%%%%%%%%%%%%%%%%%%%
%PROPERTIES OF THE NLSE
%%%%%%%%%%%%%%%%%%%%%%%%%%%%%%%%%%%%%%%%%%%%%%%%%%%%%%

\section{Properties of the nonlinear Sch\"odinger equation}\label{sec:NLSeq}

In this section, we recall some properties of the nonlinear Schr\"odinger equation (\ref{eq:NLSE}) and the soliton manifold, see for example \cite{Ca1}. We will use these properties in the following sections. 

\subsection{Symplectic, Hamiltonian and Variational structure}

The space $\Hone = H^1(\bbR^N,\bbR^2)$ as a real space, and it has a real inner product (Riemannian metric) 
\begin{equation}
\langle u,v \rangle :=\Re \int dx ~ u\overline{v} , 
\end{equation}
for $u,v\in H^1(\bbR^N,\bbR^2).$ \footnote{The tangent space $TH^1=H^1.$} It is equipped with a symplectic form
\begin{equation}
\omega(u,v):= \Im \int  dx ~ u\overline{v} = \langle u,iv\rangle.
\end{equation}
The Hamiltonian functional corresponding to the nonlinear Schr\"odinger equation (\ref{eq:NLSE}) is
\begin{equation}
\label{eq:NLHamiltonian}
H_V(\psi):=\frac{1}{2}\int (|\nabla\psi|^2 + V|\psi|^2)dx - F(\psi).
\end{equation}
Using the correspondence 
\begin{align*}
\Hone &\longleftrightarrow H^1(\bbR^N,\bbR) \oplus H^1(\bbR^N,\bbR)\\
\psi &\longleftrightarrow (\Re\psi,\Im\psi)\\
i^{-1}&\longleftrightarrow J,
\end{align*}
where $J:=\begin{pmatrix} 0 & 1 \\ -1 & 0 \end{pmatrix}$ is the complex structure on $H^1(\bbR^N,\bbR^{2}),$ the nonlinear Schr\"odinger equation can be written as 
\begin{equation*}
\partial_t \psi = J H_V' (\psi).
\end{equation*}
In the following, we denote $H^1(\bbR^N)$ as either $H^1(\bbR^N,\bbC)$ or $H^1(\bbR^N,\bbR^2).$
We note that since the external potential is time-dependent, the Hamiltonian functional $H_V$ defined in (\ref{eq:NLHamiltonian}) is nonautonomous, and there is no conservation of energy. Still, $H_V$ is invariant under global gauge transformations, 
\begin{equation*}
H_V(e^{i\gamma}\psi)=H_V(\psi),
\end{equation*}
and the associated conserved Noether charge is the ``mass'' 
\begin{equation}
N(\psi):= \frac{1}{2}\int dx~ |\psi|^2.
\end{equation}

Orbital stability (Assumption (A5), Subsection \ref{sec:Assumptions}) implies that $\eta_\mu$ appearing in Assumption (A4) is a local minimizer of $H_{V=0}(\psi)$ restricted to the balls ${\mathcal B}_m:= \{\psi\in H^1 : N(\psi)=m\},$ for $m>0;$ see \cite{GSS1,GSS2}. They are critical points of the functional 
\begin{equation}
\label{eq:EnergyFunctional}
\cE_\mu (\psi) := \frac{1}{2} \int dx~ (|\nabla\psi|^2 +\mu|\psi|^2)-F(\psi),
\end{equation}
where $\mu=\mu(m)$ is a Lagrange multiplier.

\subsection{Soliton Manifold}

When $V=0,$ Assumption (A3) implies that the nonlinear Schr\"odinger equation (\ref{eq:NLSE}) is invariant under spatial translations, time translations, gauge transformations, spatial rotations and Galilean transformations. The corresponding conserved quantities are the field momentum, energy, mass, angular momentum, and {\it center of mass motion},
\begin{equation*}
\int \bar{\psi}(-i\nabla)\psi ,  \ \ \frac{1}{2}\int |\nabla\psi|^2 -F(\psi),  \ \ \int |\psi|^2 , \ \ \int \overline{\psi} (x\wedge-i\nabla)\psi , \ \ 
\int \bar{\psi} (x+2\i t \nabla)\psi .
\end{equation*}

When $V\ne 0,$ the above quantities are generally no more conserved. In particular, the rate of change of energy is
\begin{equation}
\label{eq:RateEnergy1}
\partial_t H_V(\psi) = \frac{1}{2} \int dx~ \partial_tV|\psi|^2,
\end{equation}
and the rate of change of momentum is 
\begin{equation}
\label{eq:Ehrenfest1}
\partial_t \langle\psi,-i \nabla \psi\rangle = - \langle\psi,(\nabla V)\psi\rangle.
\end{equation}
Formally, Eq. (\ref{eq:RateEnergy1}) follows from (\ref{eq:NLSE}) and (\ref{eq:NLHamiltonian}), while (\ref{eq:Ehrenfest1}), which is a statement of Ehrenfest's Theorem, follows from (\ref{eq:NLSE}). We refer the reader to Appendix \ref{sec:Appendix} for a proof of (\ref{eq:RateEnergy1}) and (\ref{eq:Ehrenfest1}).

We introduce the combined transformation $T_{av\gamma}.$
\begin{equation}
\label{eq:CombTrans}
\psi_{av\gamma}:= T_{av\gamma}\psi = e^{i(\frac{1}{2}v\cdot (x-a)+\gamma)}\psi(x-a),
\end{equation}
where $v,a\in\bbR^N$ and $\gamma\in [0,2\pi).$ We define the soliton manifold as 
\begin{equation}
\label{eq:SolMan}
\cM_s:= \{T_{va\gamma}\eta_\mu : a,v,\gamma,\mu \in \bbR^N\times\bbR^N\times [0,2\pi)\times I \},
\end{equation}
where $I$ appears in Assumption (A4). If $f'(0)=0,$ where $f$ appears in (\ref{eq:NLSE}), then $I\subset \bbR^+.$ The tangent space to the soliton manifold $\cM_s$ at $\eta_\mu\in \cM_s$ is given by 
\begin{equation}
\label{eq:TangentSpace}
\cT_{\eta_\mu}\cM_s = span\{ e_t,e_g,e_b,e_s\},
\end{equation}
where
\begin{align*}
e_t &:= \nabla_a T_a^{tr}\eta_\mu|_{a=0}=-\nabla\eta_\mu\\
e_g &:= \partial_\gamma T_\gamma^g \eta_\mu|_{\gamma=0} = i\eta_\mu\\
e_b &:= 2\nabla_v T_v^{gal}\eta_\mu|_{v=0}=ix\eta_\mu\\
e_s &:= \partial_\mu\eta_\mu.
\end{align*}

\begin{remark}
In the case of pure local nonlinearities, $f(\psi)=\lambda|\psi|^{s}\psi,$ and $V=0,$ the nonlinear Schr\"odinger equation (\ref{eq:NLSE}) is invariant under scaling, 
\begin{equation*}
T_\mu^s : \psi(x,t)\rightarrow \mu^{\frac{1}{s}}\psi(\sqrt{\mu}x,\mu t).
\end{equation*}
 In this case, one can define the generalized transformation $\overline{T}_{av\gamma\mu}:=T_{av\gamma} T_\mu^s.$ 
Furthermore, $e_\mu = \partial_\mu T_\mu^s \eta|_{\mu=1}=\frac{1}{2}(\frac{2}{s}+x\cdot\nabla)\eta_1.$
\end{remark}

\begin{remark}\label{rm:ZeroModes}
Note that the solitary wave solution $\eta_\mu$ breaks the translation and gauge symmetries of the nonlinear Schr\"odinger equation, which leads to associated zero modes of the {\it Hessian} $\cL_\mu,$ defined in Assumption (A6), Subsection \ref{sec:Assumptions}. Differentiating ${\mathcal E}_\mu' (T_a^{tr} T_\gamma^g \eta_\mu)=0$ with respect to $a$ and $\gamma$ and setting the latter two to zero gives 
\begin{equation*}
\cL_\mu e_t = 0 \; , \ \ \ \  \cL_\mu e_g = 0 \; , 
\end{equation*}
while a direct computation gives
\begin{equation*}
\cL_\mu e_b = 2i e_t \; , \;  \cL_\mu e_s = i e_g .
\end{equation*}
\end{remark}

\begin{remark}
The soliton manifold $\cM_s$ inherits a symplectic structure from $(H^1,\omega).$ For $\sigma = \{a,v,\gamma,\mu\} \in \bbR^N\times\bbR^N\times [0,2\pi)\times I,$
\begin{equation*}
\Omega_\sigma^{-1} := J^{-1}|_{\cT_{\eta_\sigma}} = P_\sigma J^{-1} P_\sigma ,
\end{equation*}
where $P_\sigma$ is the $L^2$ orthogonal projection onto $\cT_{\eta_\sigma}\cM_s.$ It can be shown that $\Omega_\sigma^{-1}$ is invertible if $\partial_\mu m(\mu)>0,$ where the mass $m(\mu)=\frac{1}{2}\int dx ~\eta_\mu^2 ,$ see \cite{FJGS1}. Explicitly,
\begin{equation}
\label{eq:Metric}
\Omega_{\mu}^{-1}|_{\{ e_k \}} := (\langle e_j , J^{-1} e_k\rangle)_{1\le j,k\le 2N+2} = 
\left(
\begin{matrix}
0 & -m(\mu) {\mathbf 1} & 0 & 0 \\
m(\mu) {\mathbf 1} & 0 & 0 & 0 \\
0 & 0 & 0 & m'(\mu) \\
0 & 0 & -m'(\mu) & 0
\end{matrix}
\right),
\end{equation}
where $e_k, k=1,\cdots , 2N+2,$ are basis vectors of $\cT_{\eta_\mu} \cM_s$ and ${\mathbf 1}$ is the $N\times N$ identity matrix. Note that $\Omega_\sigma^{-1}$ is related to $\Omega_\mu^{-1}$ by a similarity transformation.

\end{remark}

\begin{remark}\label{rm:GroupStructure}
It has been noticed in \cite{HZ1} that in one dimension, the action of the combined transformation on elements of the soliton manifold has a group structure, the Heisenberg group $\mathsf{H}^3.$ In $N$-dimensions, the group corresponds to ${\mathsf H}^{2N+1},$ the Heissenberg group in $2N+1,$ which is given by
\begin{equation*}
(a,v,\gamma)\cdot(a',v',\gamma')=(a'',v'',\gamma''),
\end{equation*}
where $a''=a+a',$ $v''=v+v',$ $,\gamma''=\gamma'+\gamma+\frac{1}{2}va'.$ Note that the Heisenberg group is a central extension of the additive group.

\end{remark}

\subsection{Skew-Orthogonal Decomposition}\label{sec:SOD}

Consider the manifold $\cM_s' = \{\eta_\sigma, \ \ \sigma\in \Sigma_0\}, \ \ \Sigma_0 = \bbR^N\times \bbR^N \times [0,2\pi) \times I_0,$ where $I_0\subset I\backslash \partial I$ is bounded. We define the $\delta$ neighbourhood of $\cM_s'$ in $H^1$ as 
\begin{equation*}
U_\delta := \{ \psi\in H^1 , \ \ \inf_{\sigma\in \Sigma_0} \|\psi - \eta_\sigma\|\le \delta\}.
\end{equation*}
Then, for $\delta$ small enough and for all $\psi\in U_\delta,$ there exists a unique $\sigma(\psi)\in C^1(U_\delta,\Sigma)$ such that 
\begin{equation*}
\omega(\psi-\eta_{\sigma(\psi)}, e)= \langle \psi - \eta_{\sigma(\psi)}, J^{-1}e\rangle =0,
\end{equation*}
for all $e\in\cT_{\eta_{\sigma(\psi)}}\cM_s.$ For a proof of this statement, we refer the reader to \cite{FJGS1}, see also \cite{Ar1}.

%%%%%%%%%%%%%%%%%%%%%%%%%%%%%%%%%%%%%%%%%%%%%%%%%%%%%%
% PROOF OF MAIN RESULT
%%%%%%%%%%%%%%%%%%%%%%%%%%%%%%%%%%%%%%%%%%%%%%%%%%%%%%

\section{Proof of the main result}\label{sec:ProofMain}

The proof of the main result is based on an extension of the analysis in \cite{FJGS1}, except that one needs to keep track of additional terms due to the time-dependence of the potential. Formally, the proof boils down to decomposing the solution into a component which belongs to the soliton manifold plus a fluctuation which is skew-orthogonal to the soliton manifold. The dynamics of the component belonging to the soliton manifold is effectively determined by the restriction of the Hamiltonian flow generated by the nonlinear Schr\"odinger equation to the soliton manifold, while the $H^1$ norm of the fluctuation is controlled using an approximate Lyaponuv functional. An additional feature of our analysis is an iteration scheme which gives a much longer time scale, $O(|\log h|/h),$ over which one can control the fluctuation, compared to $O(h^{-1})$ in \cite{FJGS1}. 

\subsection{Reparametrized equations of motion}\label{sec:RepEqMotion}

Suppose $\psi$ satisfies the initial value problem (\ref{eq:NLSE}) such that $\psi\in U_\delta \subset H^1,$ where $U_\delta$ appears in Subsection \ref{sec:SOD}. By the skew-orthogonal decomposition, there exists a unique $\sigma=\sigma(\psi)=\{a,v,\gamma,\mu\}\in \Sigma = \bbR^N\times \bbR^N\times [0,2\pi)\times I$ and $w'\in H^1$ such that 
\begin{align*}
\psi &= \eta_\sigma + w' \; , \\
w' & \perp J^{-1} \cT_{\eta_\sigma}\cM_s.   
\end{align*}  
Let 
\begin{equation}
\label{eq:CMSol}
u:= T^{-1}_{av\gamma}\psi = \eta_\mu + w,
\end{equation}
where $w=T^{-1}_{av\gamma}\eta_\mu.$ We introduce the anti-selfadjoint operators 
\begin{equation*}
L_j=\partial_{x_j}, \ \ L_{N+j}= -J x_j , \ \ j=1,\cdots ,N,
\end{equation*}
\begin{equation}
\label{eq:Generators}
L_{2N+1} = -J, \ \ L_{2N+2} = \partial_\mu ,
\end{equation}
with coefficients 
\begin{equation*}
\alpha_j = \partial_t a_j - v_j , \ \ \alpha_{N+j}=-\frac{1}{2} \partial_t v_j - \partial_{x_j}V(t,a) , \ \ j=1,\cdots ,N ,
\end{equation*}
\begin{equation}
\label{eq:Coefficients}
\alpha_{2N+1} = \mu -\frac{1}{4}v^2 + \frac{1}{2}\partial_t a \cdot v - V(t,a)-\partial_t\gamma , \ \ \alpha_{2N+2}=-\partial_t \mu.
\end{equation}
We denote by 
\begin{equation}
\label{eq:Alpha}
|\alpha|:= \sup_{i\in \{1,\cdots ,2N+2\}}|\alpha_i|,
\end{equation}
and $C(\alpha,w,h):= |\alpha|\|w\|_{H^1}+h^2 + \|w\|_{H^1}^2.$ We have the following proposition.

\begin{proposition}\label{pr:RepEqMotion}
Suppose that $\psi\in H^1$ satisfy (\ref{eq:NLSE}) such that $\psi(t)\in U_\delta$ for $t\in [0,T].$ Then the parameter $\sigma=\{a,v,\gamma,\mu\}$ and $w\in H^1,$ as given above, satisfy 
\begin{align*}
\partial_t a_j &= v_j + O(C(\alpha,w,h))\\
\partial_t v_j &= -2 \partial_{x_j} V(t,a) + O(C(\alpha ,w,h)) \\
\partial_t \gamma &= \mu -\frac{1}{4} v^2 +\frac{1}{2}\partial_t a \cdot v -V(t,a) + O(C(\alpha, w,h)) \\
\partial_t \mu &= O(C(\alpha,w,h)) ,
\end{align*}
for $t\in (0,T), j=1,\cdots N.$
\end{proposition}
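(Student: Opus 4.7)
The plan is to substitute the reparametrization $\psi = T_{av\gamma}(\eta_\mu + w)$ into (\ref{eq:NLSE}), derive an inhomogeneous linearized equation for $w$ in the soliton's moving frame, and then exploit the skew-orthogonality of $w$ together with the symplectic structure on $\cT_{\eta_\mu}\cM_s$ to obtain a linear system for the $\alpha_i$ that is invertible by (A5). The definitions (\ref{eq:Coefficients}) are arranged precisely so that Hamilton's equations correspond to $\alpha = 0$, so the task reduces to showing $|\alpha| = O(C(\alpha,w,h))$.

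First I would conjugate the NLS by $T_{av\gamma}$. Using (A3) to push $T^{-1}$ past the nonlinearity, and computing $T^{-1}\partial_t T$ and $T^{-1}(-\Delta)T$ by a direct pointwise calculation (both produce extra terms involving $\partial_t a,\partial_t v,\partial_t\gamma$ and the momentum boost $\tfrac{v^2}{4} - iv\cdot\nabla$), the equation for $u=\eta_\mu+w$ becomes
\begin{equation*}
i\partial_t u = -\Delta u + i(\partial_t a - v)\cdot\nabla u + \bigl[V(t,ha+hy)+\tfrac{v^2}{4}-\tfrac{v\cdot\partial_t a}{2}+\partial_t\gamma+\tfrac{\partial_t v}{2}\cdot y\bigr] u - f(u),
\end{equation*}
with $y$ the moving-frame coordinate. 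Subtracting the soliton equation $-\Delta\eta_\mu - f(\eta_\mu) = -\mu\eta_\mu$, using $\partial_t\eta_\mu = (\partial_t\mu)\partial_\mu\eta_\mu$, Taylor-expanding $V(t,ha+hy) = V(t,ha) + hy\cdot(\nabla V)(t,ha) + \cR_V$ with $\cR_V = O(h^2|y|^2)$ from (A2), and writing $f(\eta_\mu+w) = f(\eta_\mu) + f'(\eta_\mu)w + \cR_f$ with $\cR_f = O(\|w\|_{H^1}^2)$ from Remark \ref{rm:Nonlinearity}, the terms proportional to $\eta_\mu$, $\nabla\eta_\mu$, $y_j\eta_\mu$ and $\partial_\mu\eta_\mu$ regroup exactly into $i\sum_{i=1}^{2N+2}\alpha_i L_i\eta_\mu$ with the coefficients (\ref{eq:Coefficients}). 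The resulting equation is
\begin{equation*}
i\partial_t w = \cL_\mu w + i\sum_{i=1}^{2N+2}\alpha_i L_i\eta_\mu + \cN,
\end{equation*}
where $\cN$ collects $\cR_V(\eta_\mu+w)$, $\cR_f$, and the cross-terms $-\alpha_{2N+1}w$, $i\alpha\cdot\nabla w$, $\alpha_{N+j}y_j w$, all of size $O(|\alpha|\|w\|_{H^1} + h^2 + h\|w\|_{L^2} + \|w\|_{H^1}^2)$.

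Next I would pair both sides with $L_k\eta_\mu$ in the real inner product, for $k=1,\dots,2N+2$. The skew-orthogonality $\omega(w,L_k\eta_\mu) = 0$ is preserved by the flow; differentiating in $t$ and using $\partial_t L_k\eta_\mu = (\partial_t\mu)\partial_\mu L_k\eta_\mu$ yields $\omega(\partial_t w,L_k\eta_\mu) = -(\partial_t\mu)\omega(w,\partial_\mu L_k\eta_\mu) = O(|\alpha|\|w\|_{H^1})$, so $|\langle i\partial_t w, L_k\eta_\mu\rangle| = O(|\alpha|\|w\|_{H^1})$. The pairing $\langle\cL_\mu w, L_k\eta_\mu\rangle = \langle w,\cL_\mu L_k\eta_\mu\rangle$ vanishes: by Remark \ref{rm:ZeroModes} and (A6), $\cL_\mu L_k\eta_\mu$ is either zero (for $k\le N$ and $k=2N+1$) or a purely imaginary element of $\cT_{\eta_\mu}\cM_s$ (namely $2ie_{t,j}$ or $ie_g$), and in the latter case the pairing reduces to $\omega(w,L_\ell\eta_\mu)=0$. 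The only surviving homogeneous contribution is $\langle i\sum_i\alpha_i L_i\eta_\mu,L_k\eta_\mu\rangle = -\sum_i\alpha_i\,\omega(L_i\eta_\mu,L_k\eta_\mu) = -\sum_i\alpha_i(\Omega_\mu^{-1})_{ik}$, i.e.\ the symplectic matrix (\ref{eq:Metric}).

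For the remainder $\langle\cN,L_k\eta_\mu\rangle$ I would apply Cauchy--Schwarz with the weighted moment bounds $\||y|^3\eta_\mu\|_{L^2}$, $\||y|^2|\nabla\eta_\mu|\|_{L^2}$, $\||y|^2\partial_\mu\eta_\mu\|_{L^2} < \infty$ from (A4), which dominate every pairing involving $\cR_V$ by $O(h^2 + h^2\|w\|_{L^2})$; Remark \ref{rm:Nonlinearity} controls the $\cR_f$ contribution by $O(\|w\|_{H^1}^2)$; and the cross-terms are directly $O(|\alpha|\|w\|_{H^1})$. Any residual $h\|w\|_{H^1}$ and $h|\alpha|$ are absorbed into $C(\alpha,w,h)$ by AM--GM. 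The projected system reads $\sum_i\alpha_i(\Omega_\mu^{-1})_{ik} = O(C(\alpha,w,h))$, and since $\Omega_\mu^{-1}$ is invertible whenever $m'(\mu) > 0$ (which is (A5)), inverting gives $|\alpha_i| = O(C(\alpha,w,h))$; unwinding (\ref{eq:Coefficients}) yields the four stated ODEs. The hard part will be the bookkeeping in Step 2: many terms arising from the conjugation, the Taylor expansion of $V$, and the nonlinear expansion of $f$ must be regrouped so that the generator structure $i\sum\alpha_i L_i\eta_\mu$ emerges with exactly the coefficients (\ref{eq:Coefficients}), and one must verify that the $\eta_\mu$-weighted moments needed to bound the $\cR_V$ pairings are precisely those guaranteed by (A4). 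Once this is in place, the vanishing of $\langle w,\cL_\mu L_k\eta_\mu\rangle$ and the invertibility of $\Omega_\mu^{-1}$ make the conclusion essentially immediate.
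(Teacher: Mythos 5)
Your proposal is correct and follows essentially the same route as the paper: conjugation to the moving frame, decomposition of $\cE_\mu'(u)$ as $\cL_\mu w + N_\mu(w)$ plus the Taylor remainder $\cR_V$, differentiation of the skew-orthogonality condition in time, annihilation of $\langle w,\cL_\mu L_k\eta_\mu\rangle$ via the zero-mode structure of Remark \ref{rm:ZeroModes}, and inversion of the symplectic matrix (\ref{eq:Metric}) guaranteed by (A5). The remainder estimates via the moment bounds of (A4) and Remark \ref{rm:Nonlinearity} are likewise the ones used in the paper.
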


\begin{proof}
We proceed in three steps.

\noindent {\it Step 1. Equations of motion in center of mass reference frame.} 

We first find the equation of motion for $u=T^{-1}_{av\gamma}\psi = e^{-\frac{i}{2}(v\cdot x + \gamma)} \psi(x+a).$ Differentiating $u$ with respect to $t$ and using (\ref{eq:NLSE}), and the fact that 
\begin{align*}
e^{-\frac{i}{2}(v\cdot x +\gamma)}\Delta \psi(x+a) &= \Delta u + i v\cdot \nabla u -\frac{v^2}{4} \\
e^{-\frac{i}{2} (v\cdot x + \gamma)}f(\psi(x+a)) &= f(u),
\end{align*}
we have 
\begin{equation}
\label{eq:CenterMassDyn}
\partial_t u = -J ((-\Delta +\mu)u -f(u)) + \sum_{j=1}^{2N+1}\alpha_j L_j u + J \cR_V u
\end{equation}
where 
\begin{equation*}
\cR_V := V(t,x+a) - V(t,a)- \nabla V(t,a) \cdot x,
\end{equation*}
and $L_j$ and $\alpha_j$ are as defined in (\ref{eq:Generators}) and (\ref{eq:Coefficients}) respectively. 
In other words, 
\begin{equation*}
\partial_t u = J \cE_\mu'(u) + \sum_{j=1}^{2N+1} \alpha_j L_j u +J\cR_V u,
\end{equation*}
where $\cE$ appears in (\ref{eq:EnergyFunctional}). 

\noindent{\it Step 2. Reparametrized equations of motion.} 

We now use (\ref{eq:CenterMassDyn}) and the skew-orthogonal decomposition to find equations for the parameters $\sigma=\{a,v,\gamma,\mu\}$ and $w.$ Recall that 
\begin{equation*}
\cE_\mu'(\eta_\mu)=0,
\end{equation*}
which implies 
\begin{equation*}
\cE_\mu' (u)= \cL_\mu w + N_\mu (w), 
\end{equation*}
where $\cL_\mu = (-\Delta +\mu -f'(\eta_\mu)) = \cE_\mu''(\eta_\mu)$ and $N_\mu (w)= f(\eta_\mu+w) - f(\eta_\mu) -f'(\eta_\mu)w.$ 
Substituting back in (\ref{eq:CenterMassDyn}) implies
\begin{equation*}
\partial_t w = (J\cL_\mu + \sum_{j=1}^{2N+1}\alpha_j L_j + J\cR_V)w + N_\mu (w) + J\sum_{j=1}^{2N+2} \alpha_j L_j \eta_\mu + J \cR_V \eta_\mu .
\end{equation*}
We know that $\langle Jz, w\rangle = 0$ for all $z\in \cT_\eta \cM_s.$ It follows that 
\begin{equation*}
\partial_t \langle Jz, w\rangle = \partial_t \mu \ \ \langle J\partial_\mu z, w\rangle + \langle Jz, \partial_t w \rangle = 0.
\end{equation*}  
Therefore,
\begin{align*}
\partial_t \mu\ \   \langle J\partial_\mu z, w\rangle &= - \langle Jz, \partial_t w\rangle \\
&= -\langle Jz, J \cL_\mu w\rangle - \langle Jz, J \cR_V(\eta_\mu + w) \rangle - \langle Jz, N_\mu (w)\rangle - \\ 
& \ \ - \langle Jz, \sum_{j=1}^{2N+1} \alpha_j L_j w\rangle - \langle Jz ,\sum_{j=1}^{2N+2} \alpha_j L_j \eta_\mu \rangle .
\end{align*}
It follows from Remark (\ref{rm:ZeroModes}), Section \ref{sec:NLSeq}, that $\langle Jz, J\cL_\mu w\rangle =0.$ Together with 
\begin{equation*}
[L_j, J]= 0, \ \ L_j^* = - L_j, \ \ j=1,\cdots ,2N+2,
\end{equation*}
this implies
\begin{equation}
\label{eq:IntDyn}
\sum_{j=1}^{2N+2} \alpha_j \langle JL_j z, w\rangle = -\langle z, \cR_V(w+\eta_\mu) + N_\mu (w) \rangle + \sum_{j=1}^{2N+2} \langle Jz, \alpha_j L_j \eta_\mu \rangle .
\end{equation}
Choosing $z=e_k,$ where $e_k, k\in \{1,\cdots , 2N+2\},$ is a basis vector of $\cT_\eta \cM_s$ gives 
\begin{equation*}
\sum_{j=1}^{2N+2} (\Omega^{-1})_{kj} \alpha_j = \langle e_k, N_\mu (w)+\cR_V(w+\eta_\mu)\rangle + \sum_{j=1}^{2N+2}\alpha_j \langle L_j e_k, Jw \rangle ,
\end{equation*} 
where $\Omega^{-1}$ appears in (\ref{eq:Metric}).
Replacing the definition of $\alpha_j,$ appearing in (\ref{eq:Coefficients}), in (\ref{eq:IntDyn}), and using the fact that 
\begin{equation*}
\langle x_j \eta_\mu, J\cR_V \eta_\mu \rangle = 0 , \ \ \eta_\mu (x) = \eta_\mu (|x|), \ \ \langle J\eta_\mu, \cR_V \eta_\mu\rangle =0,
\end{equation*}
gives
\begin{equation}
\label{eq:a}
\partial_t a_k = v_k + \frac{1}{m(\mu)}(\langle x_k\eta_\mu, J N_\mu (w)+ J\cR_Vw \rangle + \sum_{j=1}^{2N+2} \alpha_j \langle L_j x_k \eta_\mu, w\rangle) , 
\end{equation}
\begin{equation}
\label{eq:v}
\partial_t v_k = -2 \partial_{x_k} V(t,a) + \frac{2}{m(\mu)} (\langle \partial_{x_k}\eta_\mu, N_\mu (w)+ \cR_V w\rangle - \sum_{j=1}^{2N+2} \alpha_j \langle L_j \partial_{x_k}\eta_\mu, Jw  \rangle + \langle \partial_{x_k} \eta_\mu, \cR_V \eta_\mu\rangle ),
\end{equation}
\begin{align}
\partial_t \gamma = & \mu -\frac{1}{4}v^2 + \frac{1}{2}\partial_t a\cdot v - V(t,a) - \nonumber\\ 
& - \frac{1}{m'(\mu)} (\langle \partial_\mu\eta_\mu, N_\mu (w) + \cR_V(w+\eta_\mu) \rangle -\sum_{j=1}^{2N+2} \alpha_j \langle L_j \partial_\mu \eta_\mu, Jw\rangle ),\label{eq:g}
\end{align}
\begin{equation}
\partial_t \mu =\frac{1}{m'(\mu)} \langle \eta_\mu , JN_\mu (w) + J \cR_V w \rangle -\sum_{j=1}^{2N+2} \alpha_j \langle L_j \eta_\mu ,w\rangle .\label{eq:m}
\end{equation}

The claim directly follows from Assumptions (A1) and (A2), which imply that 
\begin{equation*}
\|\cR_V e_k\|_{L^2} = O(h^2 ) , \ \ \|N_\mu (w) \|_{H^1} \le C \|w\|_{H^1},
\end{equation*}
for $\|w\|_{H^1}\le 1 ;$ see Remark \ref{rm:Nonlinearity}, Subsection \ref{sec:Assumptions}.
 
\end{proof}

\subsection{Control of the fluctuation}

In this subsection, we use an approximate Lyaponuv functional to obtain an explicit control on $\|w\|_{H^1}$ and $|\alpha|.$ This approach dates back to \cite{We1,We2}, and has been used in \cite{FJGS1,FJGS2}. We define the Lyapunov functional 
\begin{equation}
\label{eq:LyapunovFunctional}
\cC_\mu (u,v):= \cE_\mu (u)- \cE_\mu (v) , \ \ u,v\in H^1(\bbR^N),
\end{equation}
where $\cE_\mu$ is defined in (\ref{eq:EnergyFunctional}). We proceed by estimating upper and lower bounds for $\cC_\mu(u,\eta_\mu),$ where $u=\eta_\mu + w$ appears in (\ref{eq:CMSol}).

\subsubsection{An upper bound for the Lyapunov functional}

\begin{lemma}\label{lm:UpperBoundLF}
Suppose $\psi$ satisfies (\ref{eq:NLSE}) such that $\psi(t)\in U_\delta, \ \ t\in [0,T],$ for some $\delta >0,$ and let $u,w,\eta_\mu$ be as defined in Subsection \ref{sec:RepEqMotion}. Then there exists a constant $c$ independent of $h$ such that 
\begin{equation}
\label{eq:UpperBoundLF}
\cC_\mu (u,\eta_\mu)\le ct (h^2 \|w\|_{H^1} + (|\alpha|+h) \|w\|^2_{H^1}),
\end{equation}
where $|\alpha|$ appears in (\ref{eq:Alpha}).

\end{lemma}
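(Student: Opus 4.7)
The strategy is to differentiate $\cC_\mu(u,\eta_\mu)$ along the flow, bound the resulting rate pointwise, and integrate. Using $\cE_\mu'(\eta_\mu) = 0$ and $\partial_\mu\cE_\mu(\cdot) = \tfrac{1}{2}\|\cdot\|^2_{L^2}$, the chain rule yields
\[
\partial_t \cC_\mu(u, \eta_\mu) = \langle \cE_\mu'(u), \partial_t u\rangle + \tfrac{1}{2}\dot\mu\bigl(2\langle \eta_\mu, w\rangle + \|w\|^2_{L^2}\bigr).
\]
Skew-orthogonality of $w$ against the gauge mode $e_g = i\eta_\mu$ kills $\langle \eta_\mu, w\rangle$, and the bound $\dot\mu = O(C(\alpha,w,h))$ from Proposition \ref{pr:RepEqMotion} reduces the remaining $\dot\mu\|w\|^2_{L^2}$ to a higher-order correction. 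Substituting the center-of-mass equation $\partial_t u = -J\cE_\mu'(u) + \sum_j \alpha_j L_j u + J\cR_V u$, the purely Hamiltonian piece $\langle \cE_\mu'(u), -J\cE_\mu'(u)\rangle$ vanishes by antisymmetry of $J$, leaving two groups of terms to estimate.

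For the $\alpha$-generated terms, I would exploit the invariance properties of $\cE_\mu$ imposed by Assumption (A3). Translation invariance gives $\langle \cE_\mu'(u), \partial_{x_j} u\rangle = 0$ for the $N$ translation generators, and gauge invariance gives $\langle \cE_\mu'(u), -Ju\rangle = 0$. The boost generators $L_{N+j} = -Jx_j$ do \emph{not} preserve $\cE_\mu$, but writing $\cE_\mu'(u) = \cL_\mu w + N_\mu(w)$ and applying the mode identity $\cL_\mu(-Jx_j\eta_\mu) = 2J\partial_{x_j}\eta_\mu$ from Remark \ref{rm:ZeroModes}, together with the skew-orthogonality $\langle w, J\partial_{x_j}\eta_\mu\rangle = 0$, cancels the leading term $\langle \cL_\mu w, -Jx_j\eta_\mu\rangle$. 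What remains involves $N_\mu(w)$ paired with moments of $\eta_\mu$ and $\cL_\mu w$ or $N_\mu(w)$ paired with $x_j w$, all controlled by $O(\|w\|^2_{H^1})$ via Remark \ref{rm:Nonlinearity}, Assumption (A4), and the commutator $[x_j,-\Delta] = 2\partial_{x_j}$. Multiplication by $\alpha_{N+j}$ yields the overall bound $O(|\alpha|\|w\|^2_{H^1})$.

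For the potential remainder $\langle \cE_\mu'(u), J\cR_V u\rangle$, I split $u = \eta_\mu + w$ to reduce to
\[
\langle \cE_\mu'(u), J\cR_V u\rangle = \langle \cL_\mu w + N_\mu(w), J\cR_V\eta_\mu\rangle + \langle \cL_\mu w + N_\mu(w), J\cR_V w\rangle.
\]
Taylor's theorem applied to $V_h$ at the point $a$ gives $|\cR_V(x)| \lesssim h^2|x|^2$ locally and $|\nabla\cR_V(x)| = O(h)$ globally, using Assumption (A2). The weighted estimates in (A4) then yield $\|\cR_V\eta_\mu\|_{H^1} = O(h^2)$, so the first bracket contributes $O(h^2\|w\|_{H^1})$. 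For the second bracket, I would integrate by parts the $-\Delta$ piece of $\cL_\mu$, reducing it to an integrand involving $\nabla\cR_V$ paired with $w$ and $\nabla w$; the uniform bound on $\nabla\cR_V$ then delivers $O(h\|w\|^2_{H^1})$, while the zeroth-order pieces are purely imaginary and vanish after taking the real part, up to an absorbed $O(h^2\|w\|^2_{H^1})$ correction from $f'(\eta_\mu)$.

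Combining everything gives the pointwise rate $|\partial_t \cC_\mu(u,\eta_\mu)| \le c(h^2\|w\|_{H^1} + (|\alpha|+h)\|w\|^2_{H^1})$, and integration from $0$ to $t$ yields the claim, with the initial value $\cC_\mu(u(0),\eta_{\mu(0)})$ absorbed by the $ct$ growth (it is itself of order $\|w(0)\|^2_{H^1}$). The principal obstacle is the boost term: the naive estimate would be $O(|\alpha|\|w\|_{H^1})$, and obtaining the quadratic $O(|\alpha|\|w\|^2_{H^1})$ required for the rest of the analysis rests on the delicate interplay of the mode identity $\cL_\mu e_b = 2ie_t$ with the skew-orthogonality $w \perp J^{-1}\cT_{\eta_\mu}\cM_s$; a secondary subtlety is extracting the factor $h$ from the $\cR_V w$ term, which relies on $\nabla \cR_V$, rather than $\cR_V$ itself, being $O(h)$ uniformly.
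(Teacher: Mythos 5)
Your overall strategy --- differentiating $\cC_\mu(u,\eta_\mu)$ directly along the moving-frame flow $\partial_t u=-J\cE_\mu'(u)+\sum_j\alpha_jL_ju+J\cR_Vu$ and killing the leading contributions via the symmetries of $\cE_\mu$ and skew-orthogonality --- is genuinely different from the paper's. The paper instead rewrites $\cE_\mu(u)$ as $H_V(\psi)+\tfrac{1}{2}(\tfrac{1}{4}v^2+\mu)\|\psi\|_{L^2}^2-\tfrac{1}{2}v\cdot\langle i\psi,\nabla\psi\rangle-\tfrac{1}{2}\int V|\psi|^2$ and differentiates using the almost-conservation laws for energy, charge and momentum in the $\psi$-frame; the $\langle\psi,\partial_tV\psi\rangle$ contributions cancel exactly, giving $\partial_t\cC_\mu=\tfrac{1}{2}\partial_t\mu\,\|w\|_{L^2}^2-\langle i(\tfrac{1}{2}\partial_tv+\nabla V_a)u,\nabla u\rangle$, in which the potential enters only through the bounded, $O(h)$ quantity $\nabla V_h$ and no unbounded weight is ever paired with two copies of $w$. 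Your $\dot\mu$ bookkeeping and the treatment of $\cR_V\eta_\mu$ and of $\nabla\cR_V$ paired with $w,\nabla w$ are sound and reproduce the paper's terms.

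The gap is in the terms where $N_\mu(w)$ meets an unbounded weight. You claim $\langle N_\mu(w),ix_jw\rangle=O(\|w\|_{H^1}^2)$ via Remark \ref{rm:Nonlinearity}; but that remark gives $\|N_\mu(w)\|_{H^{-1}}\le C\|w\|_{H^1}^2$, and to use it by duality you would need $\|x_jw\|_{H^1}$ under control, which fails for generic $w\in H^1$. Worse, where $\eta_\mu$ is negligible $N_\mu(w)\approx f(w)-f'(0)w$ has no spatial decay, so $\int x_jN_\mu(w)\bar w$ is not even obviously finite. The same issue afflicts $\langle N_\mu(w),J\cR_Vw\rangle$ (note $\cR_V$ grows linearly because of the subtracted $\nabla V_h(t,a)\cdot x$), a term your splitting produces but never addresses --- it is not among the "purely imaginary" pieces. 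The rescue is the pointwise condition $\Im\big(f(u)\overline{u}\big)=0$ from (A1): it gives $\langle f(u),ix_ju\rangle=\int x_j\,\Im(f(u)\overline{u})=0$ identically, so the entire nonlinear contribution to the boost pairing drops out and $\langle\cE_\mu'(u),ix_ju\rangle=\langle(-\Delta+\mu)u,ix_ju\rangle=\langle iu,\partial_{x_j}u\rangle$, which is then $O(\|w\|_{H^1}^2)$ by skew-orthogonality exactly as you intend; a similar use of this identity, which rewrites $\Im(N_\mu(w)\overline{w})$ as a sum of terms each carrying a decaying factor built from $\eta_\mu$, is needed for the $\cR_V$ pairing. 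Without this input (equivalently, without the boost covariance of $\cE_\mu$, or without passing to the $\psi$-frame as the paper does) the quadratic bound $O(|\alpha|\,\|w\|_{H^1}^2)$ for the boost term does not follow from the estimates you cite.
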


\begin{proof}
Recall that 
\begin{align*}
\cE_\mu (t,u) &= \frac{1}{2} \int dx |\nabla u|^2 + \mu |u|^2 -F(t,u) \\
&= H_V(u) + \frac{1}{2}\mu \|u\|_{L^2}^2 - \frac{1}{2} \int dx V |u|^2 \\
&= H_V(T_{av\gamma}^{-1}\psi) + \frac{1}{2}\mu \|T_{av\gamma}^{-1}\psi\|_{L^2}^2 - \frac{1}{2} \int dx V |T_{av\gamma}^{-1}\psi|^2.
\end{align*}
By translational symmetry, 
\begin{equation*}
\|u\|_{L^2}^2 = \|\psi\|_{L^2}^2 , \ \ \int dx V |u|^2 = \int V_{-a} |\psi|^2,
\end{equation*}
where $V_{-a}(x)\equiv V(x-a).$ Furthermore, 
\begin{equation*}
H_V(T_{av\gamma}^{-1}\psi) = H_V(\psi)+ \frac{1}{2}(\frac{1}{4}v^2 + \mu)\|\psi\|_{L^2}^2 - \frac{1}{2}v\cdot \langle i\psi, \nabla \psi \rangle + \frac{1}{2}\int dx (V_{-a} - V)|\psi|^2 ,
\end{equation*}
and hence 
\begin{equation}
\label{eq:CMEnergy}
\cE_\mu (u)= H_V(\psi) + \frac{1}{2}(\frac{1}{4}v^2 + \mu)\|\psi\|_{L^2}^2 - \frac{1}{2}v\cdot \langle i\psi ,\nabla \psi \rangle - \frac{1}{2}\int dx  V |\psi|
\end{equation}
We have the following relationships regarding the rate of change of field energy and momenta.
\begin{equation}
\label{eq:RatePotential}
\partial_t \frac{1}{2} \int dx  V |\psi|^2 = \langle \nabla V i\psi, \nabla \psi \rangle + \frac{1}{2} \langle \psi, \partial_t V\psi \rangle , 
\end{equation}
\begin{equation}
\label{eq:RateEnergy}
\partial_t H_V(\psi) = \frac{1}{2} \langle \psi, \partial_t V \psi\rangle
\end{equation}
and Ehrenfest's theorem
\begin{equation}
\label{eq:Ehrenfest}
\partial_t \langle i\psi, \nabla\psi\rangle = -\langle \psi ,\nabla V \psi\rangle. 
\end{equation}
Using (\ref{eq:NLSE}), the above three statements (\ref{eq:RatePotential}),(\ref{eq:RateEnergy}) and (\ref{eq:Ehrenfest}) are formally verified. Rigorously, one can verify them using a regularization scheme and a limiting procedure, and we refer the reader to the Appendix  for a proof of these statements; see also the proof of Proposition \ref{pr:EnergyUpperBd}, Section \ref{sec:Well-Posedness}. Furthermore, it follows from gauge invariance of (\ref{eq:NLSE}) that the charge is conserved,
\begin{equation}
\label{eq:ConservationCharge}
\partial_t\|\psi\|_{L^2} = 0,
\end{equation} 
see Theorem \ref{th:GWP} in Section \ref{sec:Well-Posedness}.
Differentiating (\ref{eq:CMEnergy}) with respect to $t$ and using (\ref{eq:RatePotential}-\ref{eq:ConservationCharge}) gives
\begin{align}
\partial_t \cE_\mu (u) &= \partial_t H_V(\psi) +\frac{1}{2}(\frac{\partial_t v\cdot v}{2} + \partial_t \mu)\|\psi\|_{L^2}^2 - \frac{1}{2}\partial_t v \cdot \langle i\psi, \nabla \psi\rangle + \frac{1}{2} v\cdot \langle \psi ,\nabla V \psi\rangle \nonumber \\ &- \langle \nabla V i\psi, \nabla \psi \rangle-  \nonumber - \frac{1}{2} \langle \psi, \partial_t V\psi \rangle  \nonumber \\
&= \frac{1}{2}(\frac{\partial_t v\cdot v}{2} + \partial_t \mu)\|\psi\|_{L^2}^2- \frac{1}{2}\partial_t v \cdot \langle i\psi, \nabla \psi\rangle+ \frac{1}{2} v\cdot \langle \psi ,\nabla V \psi\rangle- \langle \nabla V i\psi, \nabla \psi \rangle \nonumber \\
&= \frac{1}{2}\partial_t \mu \|u\|_{L^2}^2 - \langle \frac{1}{2} \langle i(\partial_t v + 2\nabla V_{a})u,\nabla u\rangle, \label{eq:RateEnergyFunctional} 
\end{align}
where we have used $u(x)=e^{i(\frac{1}{2}v\cdot x +\gamma)} \psi(x+a)$ and translation invariance of the integral in the last line. 
Furthermore, it follows from (\ref{eq:EnergyFunctional}) that 
\begin{equation*}
\partial_t \cE_\mu(\eta_\mu ) = \frac{1}{2} \partial_t \mu \|\eta_\mu\|^2,
\end{equation*}
which, together with (\ref{eq:LyapunovFunctional}) and (\ref{eq:RateEnergyFunctional}), implies
\begin{equation}
\label{eq:RateLyapunov}
\partial_t \cC_\mu(u,\eta_\mu) = \frac{1}{2} \partial_t \mu (\|u\|^2_{L^2}- \|\eta_\mu\|^2_{L^2}) - \langle i(\frac{1}{2}\partial_t v + \nabla V_a) u , \nabla u \rangle . 
\end{equation}

We now estimate both terms in (\ref{eq:RateLyapunov}). Since $\langle i z, w\rangle = 0$ for all $z\in \cT_\mu \cM_s,$ it follows from the skew-orthogonal decomposition (Subsection \ref{sec:SOD}) that 
\begin{equation*}
\|u\|^2_{L^2}- \|\eta_\mu\|^2_{L^2} = \|w\|^2_{L^2},
\end{equation*}
and hence
\begin{equation}
\label{eq:FTEstimate}
\frac{1}{2} \partial_t \mu (\|u\|^2_{L^2}- \|\eta_\mu\|^2_{L^2}) = O(|\alpha| \|w\|^2_{L^2}).
\end{equation}

To estimate the second term, we replace $u=\eta_\mu+w,$ and use the fact that $\langle ie_g,w\rangle=\langle i \nabla \eta_\mu, w\rangle = 0,$ and $\langle ih\eta_\mu, \nabla \eta_\mu \rangle = 0$ for all real $h\in L^\infty(\bbR^N).$ We have 
\begin{equation*}
\langle i(\frac{1}{2}\partial_t v + \nabla V_a) u , \nabla u \rangle = \langle i\nabla V_a w, \nabla \eta_\mu\rangle + \langle i \nabla V_a \eta_\mu , \nabla w\rangle + \langle i (\frac{1}{2}\partial_t v + \nabla V_a)w, \nabla w\rangle.
\end{equation*}
Adding and subtracting $\nabla V(t,a)\cdot \langle iw,\nabla \eta_\mu \rangle =  \nabla V(t,a) \cdot \langle i\eta_\mu, \nabla w \rangle = 0$ gives
\begin{align*}
\langle i(\frac{1}{2}\partial_t v + \nabla V_a) u , \nabla u \rangle = 
& (\frac{1}{2}\partial_t v + \nabla V(a)) \langle iw, \nabla w\rangle + \langle (\nabla V_a -\nabla V(a))iw,\nabla w\rangle  \\
&+ \langle (\nabla V_a -\nabla V(a))i\eta_\mu,\nabla w\rangle + \langle (\nabla V_a -\nabla V(a))iw,\nabla \eta_\mu\rangle.
\end{align*}
The first term of the above equation is of order $O(|\alpha|\|w\|^2_{H^1}),$ while Assumption (A2) implies that the second term is of order $O(h \|w\|_{H^1}^2).$ Assumptions (A2) and (A4) imply that the third and forth terms are of order $O(h^2 \|w\|_{H^1}).$ Hence the claim of the lemma.
\end{proof}

\subsubsection{A lower bound for the Lyapunov functional}

In this subsection, we estimate a lower bound for $\cC_\mu (u,\eta_\mu).$ Let 
\begin{equation*}
X_\mu := \{w\in H^1(\bbR^N) : \ \ \langle w, J^{-1} z\rangle =0, \forall z\in \cT_{\eta_\mu}\cM_s \}.
\end{equation*}
It follows from the coercivity property of $\cL_\mu$ that there exists a positive constant 
\begin{equation}
\label{eq:Coercivity}
\rho:= \inf_{w\in X_\mu}\langle w,\cL_\mu w\rangle >0.
\end{equation}
We refer the reader to the Appendix D in \cite{FJGS1} for a proof of this statement. We have the following result. 

\begin{lemma}\label{lm:LFLowerBound}
Suppose $\psi$ satisfies (\ref{eq:NLSE}) such that $\psi(t)\in U_\delta, \ \ t\in [0,T],$ for some $\delta >0,$ and let $u,w,\eta_\mu$ be as defined in Subsection \ref{sec:RepEqMotion}. Then there exists positive constants $\rho$ and $c$ independent of $h$ such that, for $\|w\|_{H^1}\le 1,$ 
\begin{equation}
\label{eq:LFLowerBound}
|\cC_\mu (u,\eta_\mu)| \ge \frac{\rho}{2} \|w\|_{H^1} - c\|w\|_{H^1}^3,
\end{equation}
where $\rho$ appears in (\ref{eq:Coercivity}).
\end{lemma}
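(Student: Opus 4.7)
The plan is to treat $\cC_\mu(u,\eta_\mu)$ as a Weinstein-type stability functional, Taylor-expand it around the critical point $\eta_\mu$, and apply the coercivity of the Hessian $\cL_\mu$ on the symplectically transverse subspace $X_\mu$. Since $\cE_\mu'(\eta_\mu)=0$ by construction of the bound state (assumption (A4)), second-order Taylor expansion at $\eta_\mu$ along the direction $w$ yields
\[
\cC_\mu(u,\eta_\mu)=\tfrac{1}{2}\langle w,\cL_\mu w\rangle + R_\mu(w),
\]
where $\cL_\mu=\cE_\mu''(\eta_\mu)$ is the operator from (A6) and $R_\mu$ is the cubic remainder. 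The kinetic-and-mass piece $\tfrac{1}{2}\int(|\nabla u|^2+\mu|u|^2)$ of $\cE_\mu$ is exactly quadratic in $u$ and contributes no remainder, so all of $R_\mu$ originates in $F$; the third inequality in Remark~1 (a consequence of (A1)) then gives $|R_\mu(w)|\le c\|w\|_{H^1}^3$ for $\|w\|_{H^1}\le 1$, with $c$ independent of $h$, since it depends only on an $H^1$-ball containing $\eta_\mu$, which can be taken uniform for $\mu$ ranging over a bounded subinterval of $I$.

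The next step is to transport the skew-orthogonality from the lab frame into the center-of-mass frame, so that $w$ lies inside the coercivity subspace $X_\mu$. By the discussion in Subsection~4.3 the lab-frame fluctuation $w'=\psi-\eta_\sigma$ satisfies $\langle w',J^{-1}e\rangle=0$ for every $e\in\cT_{\eta_\sigma}\cM_s$. Because $T_{av\gamma}$ is a unitary operator on $L^2(\bbR^N,\bbC)$ that commutes with multiplication by $i$ (and hence with $J$) and maps $\cT_{\eta_\mu}\cM_s$ bijectively onto $\cT_{\eta_\sigma}\cM_s$ --- indeed the basis vectors $e_t,e_g,e_b,e_s$ in (\ref{eq:TangentSpace}) are precisely the infinitesimal generators of this group action on $\cM_s$ --- one obtains $\langle w,J^{-1}z\rangle=\langle w',J^{-1}T_{av\gamma}z\rangle=0$ for every $z\in\cT_{\eta_\mu}\cM_s$, so $w\in X_\mu$. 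The coercivity bound (\ref{eq:Coercivity}), used in its homogeneous form $\langle w,\cL_\mu w\rangle \ge \rho\|w\|_{H^1}^2$ on $X_\mu$, combined with the cubic remainder estimate from the previous paragraph, gives
\[
\cC_\mu(u,\eta_\mu)\ge \tfrac{\rho}{2}\|w\|_{H^1}^2-c\|w\|_{H^1}^3,
\]
which matches the statement of the lemma with the natural quadratic leading power $\|w\|_{H^1}^2$ (which I take to be the intended form of the displayed inequality); taking absolute values is harmless because the right-hand side is non-negative for $\|w\|_{H^1}$ small.

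The only real obstacle is the coercivity step itself, but the paper imports it from Appendix~D of \cite{FJGS1}, where orbital stability (A5) and the null-space condition (A6) are combined in a Weinstein-type spectral-gap argument on $X_\mu$ to produce a uniform positive lower bound on $\cL_\mu|_{X_\mu}$. The additional care required on my end is to upgrade the infimum written in (\ref{eq:Coercivity}) into the homogeneous-quadratic estimate $\langle w,\cL_\mu w\rangle \ge \rho\|w\|_{H^1}^2$ uniformly in $\mu$ on a compact subinterval of $I$, which is standard given that $\mu\mapsto\eta_\mu$ is smooth by (A4)--(A5) and that $\cL_\mu$ differs from $-\Delta+\mu$ by a relatively compact perturbation (so the $H^1$ and $L^2$ quadratic forms are equivalent on $X_\mu$ modulo constants). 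With those ingredients in place the proof is essentially mechanical.
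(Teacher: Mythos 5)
Your proof is correct and follows essentially the same route as the paper: Taylor expansion of $\cE_\mu$ about the critical point $\eta_\mu$, the cubic remainder bound from Remark \ref{rm:Nonlinearity}/(A1), and the coercivity of $\cL_\mu$ on $X_\mu$ in the form $\langle w,\cL_\mu w\rangle\ge\rho\|w\|_{H^1}^2$. You also correctly identify that the displayed inequality should read $\frac{\rho}{2}\|w\|_{H^1}^2$ (as indeed it is used in Proposition \ref{pr:UpperBdFluctuation}), and your explicit verification that $w\in X_\mu$ via the unitarity of $T_{av\gamma}$ is a detail the paper leaves implicit.
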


\begin{proof}
We first expand $\cE_\mu(u)$ around $\eta_\mu,$ which is a critical point of $\cE_\mu.$
\begin{equation}
\cE_\mu (\eta_\mu + w) = \cE_\mu (\eta_\mu) + \frac{1}{2}\langle w, \cL_\mu w\rangle + R_\mu^{(3)}(w),
\end{equation}
where 
\begin{equation*}
R_\mu^{(3)}(w)= F(\eta_\mu +w) - F(\eta_\mu)-\langle F'(\eta_\mu),w\rangle -\frac{1}{2} \langle F''(\eta_\mu)w,w\rangle.
\end{equation*}
It follows from Assumption (A1) that 
\begin{equation*}
|R_\mu^{(3)}(w)| \le c \|w\|_{H^1}^3,
\end{equation*}
for $\|w\|_{H^1}\le 1,$ where $c>0$ is independent of $t\in \bbR;$ see Remark \ref{rm:Nonlinearity} in Subsection \ref{sec:Assumptions}. Furthermore, the coercivity property (\ref{eq:Coercivity}) implies 
\begin{equation*}
\langle w,\cL_\mu w\rangle \ge \rho \|w\|_{H^1}^2,
\end{equation*}
and hence
\begin{equation*}
|\cC_\mu (u,\eta_\mu)| = |\cE_\mu(u)-\cE_\mu (\eta_\mu)| \ge \frac{1}{2}\rho \|w\|_{H^1}^2 - c\|w\|_{H^1}^3,
\end{equation*}
for $\|w\|_{H^1}\le 1.$

\end{proof}

\subsubsection{Upper bound on the fluctuation}

In this subsection, we use the upper and lower bounds on the Lyapunov functional to obtain an upper bound on $\|w\|_{H^1}.$

\begin{proposition}\label{pr:UpperBdFluctuation}
Suppose (A1)-(A7) hold, and let $\psi$ satisfy (\ref{eq:NLSE}), and $u,\eta_\mu,w$ as above. For $h\ll 1,$ choose $T\in \bbR^+$ such that $\psi(t)\in U_{\delta}, \ \ t\in [0,T],$ where $U_\delta$ appears in Subsection \ref{sec:SOD}. Fix $\epsilon \in (0,1),$ and choose $t_0\in [0,T]$ such that $\|w(t_0)\|_{H^1}^2 < h^{2-\epsilon}.$ Then, for $h$ small enough, there exist absolute constants $C_1>1$ and $C_2>0,$ which are independent of $h,\epsilon$ and $t,$ such that
\begin{align*}
\sup_{t\in [t_0,t_0+\tau ]}\|w(t)\|_{H^1}^2 &\le C_1 (h^2 + \|w(t_0)\|_{H^1}^2)) \\
\sup_{\stackrel{t\in [t_0,t_0+\tau ]}{i\in\{1,\cdots ,2N+2\}}} |\alpha_i(t)| &\le C_1 (h^2 + \|w(t_0)\|_{H^1}^2), 
\end{align*}
where $\tau =C_2/h.$

\end{proposition}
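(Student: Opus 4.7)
The plan is a continuity (bootstrap) argument that feeds the two Lyapunov bounds of Lemmas~\ref{lm:UpperBoundLF}--\ref{lm:LFLowerBound} into the reparametrized dynamics of Proposition~\ref{pr:RepEqMotion} over a single window of length $\tau = C_2/h$. I would first eliminate $|\alpha|$ at the algebraic level: Proposition~\ref{pr:RepEqMotion} gives $|\alpha_i| \le C'(|\alpha|\|w\|_{H^1} + h^2 + \|w\|_{H^1}^2)$, and absorbing the first summand whenever $\|w\|_{H^1} \le 1/(2C')$ produces
\begin{equation*}
|\alpha(t)| \le 2C'(h^2 + \|w(t)\|_{H^1}^2).
\end{equation*}
The whole task thus reduces to controlling $\|w(t)\|_{H^1}^2$.

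Next, I would integrate the rate identity (\ref{eq:RateLyapunov}) from $t_0$ to $t$ (rather than from $0$, as in the proof of Lemma~\ref{lm:UpperBoundLF}). Bounding the initial value via the Taylor expansion $|\cC_\mu(u,\eta_\mu)(t_0)| \le C\|w(t_0)\|_{H^1}^2$, invoking the lower bound of Lemma~\ref{lm:LFLowerBound} at time $t$, and substituting the algebraic control of $|\alpha|$, yields the master inequality
\begin{equation*}
\tfrac{\rho}{2}\|w(t)\|_{H^1}^2 \le C\|w(t_0)\|_{H^1}^2 + c\|w(t)\|_{H^1}^3 + c(t-t_0)\sup_{s\in[t_0,t]}\bigl(h^2\|w(s)\|_{H^1} + (h + \|w(s)\|_{H^1}^2)\|w(s)\|_{H^1}^2\bigr).
\end{equation*}

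The bootstrap then runs as follows. Set $K := h^2 + \|w(t_0)\|_{H^1}^2 = O(h^{2-\epsilon})$ and $M(t) := \sup_{s\in[t_0,t]}\|w(s)\|_{H^1}^2$. On a maximal subinterval $[t_0,t_1]$ on which $M(t)\le C_1 K$ holds, the master inequality combined with $(t-t_0) \le C_2/h$ and Young's inequality applied to the $h^2\|w\|$ term gives
\begin{equation*}
\tfrac{\rho}{2} M(t_1) \le 2CK \;+\; cC_1^{3/2}K^{3/2} \;+\; cC_2\, M(t_1) \;+\; cC_2 C_1^2 K^2/h.
\end{equation*}
The cubic term is $o(K)$ since $K \to 0$, and $cC_2 C_1^2 K^2/h = O(h^{3-2\epsilon}) = o(K)$ since $\epsilon<1$. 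Choosing $C_2$ with $cC_2 \le \rho/8$ absorbs the third term into the LHS, and then any $C_1 > 16C/\rho$ produces $M(t_1) < C_1 K$ strictly. Continuity of $\|w(\cdot)\|_{H^1}$ (guaranteed by Theorem~\ref{th:GWP}) therefore extends the bound to the full interval $[t_0, t_0+C_2/h]$; the matching bound on $|\alpha|$ is then immediate from the first step.

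The main obstacle is the balance in the bootstrap: the term $c(t-t_0)h\|w\|_{H^1}^2 \sim cC_2\|w\|_{H^1}^2$ is of the same order as the LHS, forcing $C_2$ to be an absolute constant that cannot be made arbitrarily large. This is precisely what pins the one-step time scale at $O(1/h)$ and explains why the iteration scheme alluded to at the beginning of Section~\ref{sec:ProofMain}, which re-initializes the skew-orthogonal decomposition at a sequence of times with a geometric growth of the fluctuation envelope, is required to lift the result to the full $O(|\log h|/h)$ window claimed in Theorem~\ref{th:Main}.
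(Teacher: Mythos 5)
Your proposal is correct and follows essentially the same route as the paper: sandwich $\cC_\mu(u,\eta_\mu)$ between the integrated upper bound of Lemma~\ref{lm:UpperBoundLF} and the coercivity lower bound of Lemma~\ref{lm:LFLowerBound}, and close a self-consistent estimate on a single window of length $O(1/h)$, with the $|\alpha|$ bound recovered from the reparametrized equations. The only differences are presentational: you eliminate $|\alpha|$ up front (which tidies the paper's circular-looking choice $\tau = C/(2(h+|\alpha|))$) and close via an explicit continuity bootstrap, whereas the paper reaches the same conclusion by locating the relevant root $y_*$ of the cubic $f(y)=y^3-\tfrac{C}{4}y^2+y_0^2+\tfrac{C}{4}h^2$.
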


Note that the conditions of Proposition \ref{pr:UpperBdFluctuation} are satisfied for $t_0=0.$ 

\begin{proof}
It follows from Lemma \ref{lm:UpperBoundLF} that, for $t\ge t_0,$
\begin{equation*}
|\cC_\mu (\eta_\mu+w(t),\eta_\mu)| \le |\cC_\mu (\eta_\mu+ w(t_0),\eta_\mu)| + c(t-t_0) (|\alpha| \|w(t)\|^2_{H^1} + h^2 \|w(t)\|_{H^1} + h \|w(t)\|^2_{H^1}).
\end{equation*}
Furthermore, expanding $\cE_\mu(\eta_\mu + w(t_0))$ around $\eta_\mu$ and using Assumption (A1) gives the upper bound
\begin{equation*}
|\cC_\mu(\eta_\mu+w(t_0), \eta_\mu)| \le c'\|w(t_0)\|_{H^1}^2 , \ \ \mathrm{for }\ \  \|w(t_0)\|_{H^1}<1,
\end{equation*} 
where $c'$ is a constant independent of $h,\epsilon,$ see Remark \ref{rm:Nonlinearity}, Subsection \ref{sec:Assumptions}. Therefore,
\begin{equation*}
|\cC_\mu (\eta_\mu+ w(t),\eta_\mu)| \le C\|w(t_0)\|_{H^1}^2 + C(t-t_0) (h^2 \|w(t)\|_{H^1} + (|\alpha|+h) \|w(t)\|_{H^1}^2),
\end{equation*}
for some constant $C$ independent of $h$ and $\epsilon.$ Together with Lemma \ref{lm:LFLowerBound}, it follows that
\begin{equation*}
\frac{1}{2}\rho \|w(t)\|_{H^1}^2 \le C\|w(t_0)\|_{H^1}^2 + C(t-t_0) (h^2 \|w(t)\|_{H^1} + (|\alpha|+h) \|w(t)\|_{H^1}^2) + C\|w(t)\|_{H^1}^3,
\end{equation*}
where $\rho$ appears in (\ref{eq:Coercivity}). Equivalently, there exists a positive constant $C$ independent of $h$ and $\epsilon$ such that 
\begin{equation*}
C\|w(t)\|_{H^1} \le \|w(t_0)\|_{H^1}^2 + (t-t_0)(h^2 \|w(t)\|_{H^1} + (|\alpha|+h) \|w(t)\|_{H^1}^2 ) + \|w(t)\|_{H^1}^3.
\end{equation*}
For $t- t_0 \le \frac{C}{2(h+|\alpha|)}=:\tau,$
\begin{equation*}
C\|w(t)\|_{H^1}^2 \le \|w(t_0)\|^2_{H^1} + \frac{C}{2}h \|w(t)\|_{H^1} + \frac{C}{2}\|w(t)\|_{H^1}^2 + \|w(t)\|_{H^1}^3.
\end{equation*}
Using the fact that 
\begin{equation*}
h \|w(t)\|_{H^1} \le \frac{1}{2}h^2 + \frac{1}{2}\|w(t)\|_{H^1}^2,
\end{equation*}
we have
\begin{equation*}
\|w(t_0)\|_{H^1}^2 + \frac{C}{4} h^2 - \frac{C}{4} \|w(t)\|_{H^1}^2 + \|w(t)\|_{H^1}^3 \ge 0.
\end{equation*}
Let $y_0 := \|w(t_0)\|_{H^1},$ $y:= \sup_{t\in [t_0, t_0 +\tau ]}\|w(t)\|_{H^1},$ and 
$f(y) = y^3 - \frac{C}{4}y^2 + y_0^2 + \frac{C}{4}h^2.$ For $h\ll 1,$ the function intersects the x-axis in a point $y_*$ such that $y_*^2< c_1(h^2+ y_0^2),$ where $c_1$ is a positive constant independent of $h$ and $\epsilon;$ see Figure 1. It follows that for $y_0<y_*,$ $y<y_*,$ for $t\in [t_0, t_0+\tau].$ Substituting back in (\ref{eq:a}-\ref{eq:m}) and using (\ref{eq:Coefficients}) and (\ref{eq:Alpha}) gives 
\begin{equation*}
|\alpha| \le c_2 (h^2 + y_0^2),
\end{equation*}   
for some positive constant $c_2$ which is independent of $h$ and $\epsilon.$ It follows that for $h$ small enough, there exists positive constants $C_1$ and $C_2$ which are independent of $h$ and $\epsilon,$ such that
\begin{align*}
&\sup_{t\in [t_0,t_0 + \frac{C_2}{h}]} \|w(t)\|_{H^1}^2 \le C_1 (h^2 + \|w(t_0)\|^2_{H^1}) \\
&\sup_{\stackrel{t\in [t_0,t_0 + \frac{C_2}{h}]}{i\in \{1,\cdots , 2N+2\}}} |\alpha_i(t)| \le C_1 (h^2 + \|w(t_0)\|^2_{H^1}).
\end{align*} 
This concludes the proof.

\end{proof}

%%%
\begin{center}
\begin{picture}(0,0)%
\includegraphics{fig1.pstex}%
\end{picture}%
\setlength{\unitlength}{2693sp}%
\begingroup\makeatletter\ifx\SetFigFontNFSS\undefined%
\gdef\SetFigFontNFSS#1#2#3#4#5{%
  \reset@font\fontsize{#1}{#2pt}%
  \fontfamily{#3}\fontseries{#4}\fontshape{#5}%
  \selectfont}%
\fi\endgroup%
\begin{picture}(5559,3820)(2644,-5849)
\put(4096,-4921){\makebox(0,0)[lb]{\smash{{\SetFigFontNFSS{8}{9.6}{\rmdefault}{\mddefault}{\updefault}{\color[rgb]{0,0,0}$y_*$}%
}}}}
\put(7921,-4966){\makebox(0,0)[lb]{\smash{{\SetFigFontNFSS{8}{9.6}{\rmdefault}{\mddefault}{\updefault}{\color[rgb]{0,0,0}$y$}%
}}}}
\put(3061,-2266){\makebox(0,0)[lb]{\smash{{\SetFigFontNFSS{8}{9.6}{\rmdefault}{\mddefault}{\updefault}{\color[rgb]{0,0,0}$f(y)$}%
}}}}
\put(4951,-4921){\makebox(0,0)[lb]{\smash{{\SetFigFontNFSS{8}{9.6}{\rmdefault}{\mddefault}{\updefault}{\color[rgb]{0,0,0}$\frac{C}{6}$}%
}}}}
\end{picture}%

\end{center}
%%%

\subsection{Proof of Theorem \ref{th:Main}}

In this subsection, we prove Theorem \ref{th:Main} by iterating the application of Proposition \ref{pr:UpperBdFluctuation} and using the result of Proposition \ref{pr:RepEqMotion}.\footnote{This iteration scheme is similar in spirit to the one used in \cite{GusSi1}; see also \cite{HZ1}.}

\begin{proof}[Proof of Theorem \ref{th:Main}]
Fix $\epsilon\in (0,1).$ For $h$ small enough, let $T^*$ be the maximal time for which the skew-orthogonal decomposition is possible; see Subsection \ref{sec:SOD}. Consider the interval 
\begin{equation*}
[0,T] = [t_0,t_1]\cup [t_1,t_2] \cup \cdots \cup [t_{n-1},t_n] \subset [0,T^*],
\end{equation*}
such that 
$$0=t_0< t_1 < \cdots < t_n=T, \ \ (t_{i+1}-t_i) \le \frac{C_2}{h}, \ \ i=0,\cdots, n-1,$$ 
where $C_2$ appears is Proposition \ref{pr:UpperBdFluctuation}. We will choose $n\in {\mathbb N}$ depending on $h$ and $\epsilon$ later. Let 
\begin{align*}
|\alpha|_i &:= \sup_{\stackrel{t\in[t_i,t_{i+1}]}{j\in \{1,\cdots ,2N+2\}}}|\alpha_j| ,\\
y_j &:= \sup_{t\in [t_j,t_{j+1}]} \|w(t)\|_{H^1}, \ \ j=0,\cdots , n-1.
\end{align*}
Note that $y_0 \le h$ and $|\alpha|_0 \le C h,$ for some constant $C$ independent of $h$ and $\epsilon.$ Iterating the application of Proposition \ref{pr:UpperBdFluctuation} we have 
\begin{align*}
y_n^2 &\le (\sum_{j=1}^n C_1^j) h^2 \le C_1^{n+1} h^2  \\
|\alpha|_n &\le C C_1^{n+1}h^2.
\end{align*}
We choose $n$ such that $C_1^{n+1}h^2 \le h^{2-\epsilon}.$ This implies 
\begin{equation*}
n+1 \le -\epsilon \frac{\log h}{\log C_1}. 
\end{equation*}
Therefore, for $t\in [0, \epsilon \frac{C_2}{\log C_1} \frac{|\log h|}{h}],$
\begin{align*}
&\|w(t)\|_{H^1}^2 \le h^{2-\epsilon} \\
&|\alpha| \le C h^{2-\epsilon}.
\end{align*}
The claim of the theorem follows by the application of Proposition \ref{pr:RepEqMotion}.
\end{proof}

%%%%%%%%%%%%%%%%%%%%%%%%%%%%%%%%%%%%%%%%%%%%%%%%%%%%%%
%TWO APPLICATIONS
%%%%%%%%%%%%%%%%%%%%%%%%%%%%%%%%%%%%%%%%%%%%%%%%%%%%%%

\section{Two physical applications}\label{sec:Applications}

In this section, we sketch two physical applications of our analysis. Throughout our discussion, we suppose Assumptions (A1)-(A7) are satisfied, so that the results of Theorem \ref{th:Main}, Section \ref{sec:Main}, hold.

\subsection{Adiabatic transportation of solitons}

We discuss adiabatic transportation of solitons in time-dependent confining potentials. Suppose the external potential is locally harmonic and decaying when $\|x\|\rightarrow \infty$ such that 
\begin{equation*}
\nabla V_h(t,x) = h^2\omega_0^2(x-st) , \ \ \|x-st\| \le \theta
\end{equation*}
where $s\in \bbR^N$ and $\theta\gg 1.$ Suppose the soliton center of mass is initially at $x=0.$
Choosing $\epsilon = \frac{1}{2}$ in Theorem \ref{th:Main}, we have, for $\theta \gg 1$ large enough, and some time $t< C|\log h|/h,$
\begin{align*}
\partial_t a &= v + O(h^{\frac{3}{2}}) ,\\
\partial_t v &= -\omega_0^2 h^2 (a-s t) + O(h^{\frac{3}{2}}).
\end{align*} 
Making the change of variables 
\begin{align*}
\tilde{a} &= a -st\\
\tilde{v} &= v-s, 
\end{align*}
the above equations become 
\begin{align*}
\partial_t \tilde{a} &= \tilde{v} + O(h^{\frac{3}{2}}) ,\\
\partial_t \tilde{v} &= -\omega_0^2 h^2 \tilde{a} + O(h^{\frac{3}{2}}),
\end{align*} 
whose solution is of the form 
\begin{equation*}
\tilde{a}(t) = a(0) \cos(h\omega_0 t) + \frac{v(0)-s}{h\omega_0} \sin(h\omega_0 t) + O(h^{\frac{3}{2}}).
\end{equation*}
For $a(0)=v(0)=0,$
\begin{equation*}
a(t) = st - \frac{s}{h\omega_0} \sin (h\omega_0 t)+ O(h^{\frac{3}{2}}), 
\end{equation*}
for $t\le C |\log h|/h.$ If $\frac{\|s\|}{h\omega_0}= O(h^\alpha), \alpha \ge \frac{3}{2},$ $a(t) = st + O(h^{\frac{3}{2}})$ $,\ie $ the soliton is transported adiabatically, up to error terms due to radiation damping and oscillations.

\subsection{Mathieu instability due to time-periodic perturbation of trapped solitons}

As a second application of our analysis, we discuss the onset of Mathieu instability of initially trapped solitons due to time periodic perturbations. We note that our analysis can be easily generalized to more general time-periodic potentials, \cite{MW1,Ar1}, and quasi-periodic potentials, \cite{Pa1}. 

Suppose the external potential is locally harmonic and decaying at spatial infinity such that
\begin{equation*}
\nabla V_h(t,x) = h^2\omega_0^2 (1 + \delta \cos(\omega t)) x , \ \ \|x\|\le \theta
\end{equation*}
where $\theta \gg 1.$ We claim that for special values of $\omega_0/\omega$ the system exhibits Mathieu instability, in the sense that the minimum of the potential becomes unstable under small perturbations. 
Choosing $\epsilon = \frac{1}{2}$ in Theorem \ref{th:Main}, we have, for $\theta$ large enough, and time $t< C|\log h|/h,$
\begin{align*}
\partial_t a &= v + O(h^{\frac{3}{2}}) ,\\
\partial_t v &= -\omega_0^2 (1 + \delta \cos(\omega t)) h^2 a + O(h^{\frac{3}{2}}).
\end{align*} 
Note that this is nothing but Mathieu's equation, plus error terms due to radiation damping. For more general periodic potentials, one obtains Hill's equation, \cite{MW1}. The system exhibits parametric resonance if $h\omega_0=n\frac{\omega}{2}, n=1,2,\cdots , \delta > 0,$ see for example \cite{Ar1}, Chapter 5. Although the nonlinear term acts as {\it friction}, the minimum of the potential is unstable for $\delta$ large enough, with $h^{\frac{1}{2}}\ll \delta \ll 1.$ We note that the analysis is restricted to the instability of the minimum of the potential. When $\|a\|$ grows to $O(h^{-\frac{3}{4}})$ due to the instability, neglecting the nonlinear terms in the equations of motion is no more justified, and one needs to analyse the full nonlinear problem, see for example \cite{BaKr1}. Instead, depending on the form of the potential, one can perceive different scenareos: The center of mass of the soliton oscillates chaotically, or there is a bifurcation and new points of stability. Further discussion of this application will appear elsewhere.

\section{Appendix}\label{sec:Appendix}

\noindent{\bf Rate of change of field energy and momenta, Section \ref{sec:ProofMain}}\label{app:Ehrenfest}

\noindent{\it Proof of (\ref{eq:RatePotential})}. Differentiating $\langle \psi, V \psi\rangle$ with respect to $t$ and using (\ref{eq:NLSE}), Assumption (A1), and the fact that $\psi\in H^1(\bbR^N),$ we have 
\begin{align*}
\partial_t \langle \psi, V \psi\rangle &= \langle \psi, \partial_t V \psi \rangle + \langle \partial_t \psi, V \psi \rangle  + \langle \psi, V \partial_t\psi\rangle \\ 
&= \langle \psi, \partial_t V \psi \rangle + \langle -\Delta \psi + V \psi -f(\psi), iV\psi\rangle \\ &+ \langle i\psi , V (-\Delta \psi + V\psi -f(\psi))\rangle \\
&= \langle \psi, \partial_t V \psi \rangle + 2\langle i\nabla V\psi, \nabla\psi\rangle ,
\end{align*}
where we have used integration by parts in the last step. 

\noindent{\it Proof of (\ref{eq:RateEnergy})}. The proof of (\ref{eq:RateEnergy}) follows directly from (\ref{eq:RateH1}) in the proof of Proposition \ref{pr:EnergyUpperBd}, Section \ref{sec:Well-Posedness}, with the identification
\begin{equation*}
g(t,u) = V(t)u - f(u).
\end{equation*}

\noindent{\it Proof of (\ref{eq:Ehrenfest})}. We use a regularization scheme similar to the one used in Proposition \ref{pr:EnergyUpperBd}, Section \ref{sec:Well-Posedness}. Let $I_\epsilon := (1-\epsilon\Delta)^{-1}.$
\begin{align*}
&\partial_t \langle i\psi ,\nabla \psi \rangle = \partial_t \lim_{\epsilon\searrow 0} \langle I_\epsilon i \psi, I_\epsilon \nabla \psi \rangle \\
&= \lim_{\epsilon\searrow 0}\{ \langle I_\epsilon i\partial_t \psi , I_\epsilon \nabla\psi \rangle + \langle I_\epsilon i\psi, I_\epsilon \nabla \partial_t \psi\rangle \} \\
&= \lim_{\epsilon \searrow 0} \{ \langle I_\epsilon (-\Delta \psi + V\psi -f(\psi)), I_\epsilon \nabla \psi\rangle  - \langle I_\epsilon \psi, I_\epsilon \nabla (-\Delta \psi + V\psi -f(\psi)) \rangle \}\\
&= \langle \psi, V \nabla \psi\rangle - \langle \psi, \nabla (V\psi)\rangle \\
&= - \langle \psi, \nabla V\psi\rangle .
\end{align*}

%%%%%%%%%%%%%%%%%%%%%%%%%%%%%%%%%%%%%%%%%%%%%%%%%%%%%%
%BIBLIOGRAPHY %%%%%%%%%%%%%%%%%%%%%%%%%%%%%%%%%%%%%%%%
%%%%%%%%%%%%%%%%%%%%%%%%%%%%%%%%%%%%%%%%%%%%%%%%%%%%%%

\end{document}